\documentclass[11pt]{article}
\usepackage{style}
\usepackage[left=1in,bottom=1in,top=1in,right=1in]{geometry}
\hypersetup{%
	colorlinks   = true, %
	urlcolor     = blue, %
	linkcolor    = blue, %
	citecolor    = blue,  %
}

\usepackage[dvipsnames]{xcolor}
\usepackage{enumitem}

\title{Pareto-optimal Non-uniform Language Generation}

\author{Moses Charikar\thanks{Stanford University. Email: \texttt{moses@cs.stanford.edu.}}
\and
Chirag Pabbaraju\thanks{Stanford University. Email: \texttt{cpabbara@cs.stanford.edu.}}}
\date{\today}

\graphicspath{{img/}}

\begin{document}

\maketitle
\begin{abstract}
    \citet*{kleinberg2024language} 
    recently proposed an interesting model for language generation in the limit:
    Given a countable collection of languages, and an adversary enumerating the strings of some language $L$ from the collection, %
    the objective is to generate \textit{new} strings from the target language, such that all strings generated beyond some finite time are valid.
    \citet*{li2024generation} and 
    \citet*{charikar2024exploring} 
    showed strong
    \textit{non-uniform} generation guarantees in this model, 
    giving algorithms that generate new valid strings from $L$ after seeing a number of distinct input strings $t(L)$ that depends only on $L$ (and the collection), but \textit{not} the %
    enumeration order.
    However, for both these works, %
    the language-wise \textit{generation times} $t(L)$ of the algorithm can be strictly sub-optimal.

    In this work, we study %
    \textit{Pareto-optimality}
    of non-uniform language generation in the limit. 
    We propose an algorithm, whose generation times $t^\star(L)$ are (almost) Pareto-optimal: any other algorithm whose generation time for some language $L$ is strictly smaller than $t^\star(L)$, \textit{must satisfy} that its generation time for some \textit{other} language $L'$ is strictly worse than $t^\star(L')$. %
    Pareto-optimality is essentially the best that one can achieve for non-uniform generation. %
    Our algorithmic framework conveniently adapts to further give Pareto-optimal non-uniform generation algorithms in the practically motivated settings of \textit{noisy} as well as \textit{representative} generation. 
\end{abstract}
\newpage
\section{Introduction}
\label{sec:intro}

The phenomenal success of large language models in generating coherent language motivates the study of concrete theoretical models to explain their working. Synthesizing a formal model that is tractable to analyze, but also representative enough, is challenging, given the extremely complex architecture of these large language models. Nevertheless, the foundational work of \cite{gold1967language} and \cite{angluin1979finding, angluin1980inductive} on language identification provides an excellent starting point, and motivated the recent formulation of language generation in the limit by \cite{kleinberg2024language}. The setup is as follows: there is a countable collection $\mcC=(L_1,L_2,L_3,\ldots)$ of languages (where each language is an infinite subset of strings from a common universe). An adversary chooses a target language $L$, and lists the strings in $L$ in an online fashion according to an ordering of their choice. This corresponds to the ``training data'' that is input to a generative language model which receives a string as input and produces a string as output at every time step.
The goal of the learning algorithm is \textit{generation in the limit}, i.e.,~to guarantee that beyond some finite time, the algorithm \textit{only} generates new strings from $L$ that have not yet appeared in the input. Under this model, \cite{kleinberg2024language} establish a surprisingly general positive result: there exists an algorithm that achieves the guarantee of generation in the limit for \textit{any} countable collection $\mcC$. 

However, the time at which the algorithm correctly starts generating can be very sensitive to the precise order in which the language is being enumerated as input; as such, the adversary might be able to arbitrarily delay the time of correct generation, which may be undesirable. This motivates the notion of \textit{non-uniform} language generation introduced in the work of \cite{li2024generation}. Non-uniform language generation requires that for every language $L$ that the adversary may chose from the collection, the algorithm starts generating validly as soon as the input comprises of $t(L)$ many distinct strings, \textit{independent} of what these strings are, and what order they are enumerated in. This is a considerably stronger guarantee compared to vanilla generation in the limit, %
and a priori, it would appear that only a strict subset of countable collections may admit non-uniform generation. Nevertheless, as shown independently by both \cite{li2024generation} and \cite{charikar2024exploring}, non-uniform generation in the limit also turns out to be possible for every countable collection!

Given these strong positive results, a natural direction that has %
not been investigated yet is to determine the \textit{optimal} non-uniform generation algorithm, which requires the least possible number of inputs for \textit{every language simultaneously} in the collection. 
However, for any language $L$, there is an algorithm that achieves $t(L)=1$.
Such an algorithm simply generates strings from $L$ 
at the outset,
until %
it sees a string outside $L$.
The algorithms of \cite{li2024generation} and \cite{charikar2024exploring} (see \Cref{appsec:example})
can do this for any language $L$. 
Since the optimal generation time for any language $L$ is $1$,
achieving optimality simultaneously for all languages $L$ is not feasible.

Nevertheless, in this work, we initiate a study of \textit{Pareto-optimal} non-uniform language generation. 
We say that a sequence of generation times $t(L_1), t(L_2),\ldots$ for languages in the collection is Pareto-optimal, if it satisfies the following property: any algorithm whose generation time for some $L_i$ is smaller than $t(L_i)$ \textit{must} satisfy that its generation time for some other language $L_j$ is larger than $t(L_j)$.\footnote{This is a slight abuse of notation, since our definition of Pareto-optimal sequence of generation times does not require an algorithm which \textit{achieves} this sequence. Technically, this is a lower bound on the Pareto front.}
As our first contribution, we give a constructive process to obtain a canonical Pareto-optimal sequence of generation times for any given collection. The process resembles the textbook insertion sort algorithm, albeit with a carefully chosen comparator, and incrementally computes an ordering of the first $n$ languages in the collection, for increasing $n$. Maintaining the ordering given by the comparator ensures that the generation time for every language is determined Pareto-optimally.

The previous non-uniform generation algorithms of \cite{li2024generation} and \cite{charikar2024exploring} do not achieve Pareto-optimality %
(see \Cref{example:lrt-cp-suboptimal}). Taking inspiration from our constructive process, we next give a non-uniform generation algorithm that is \textit{almost} Pareto-optimal, in the following sense: the generation times of the algorithm can be made to match the generation times in the canonical Pareto-optimal sequence constructed above for an arbitrarily large (but finite) prefix of languages in the collection. 

\begin{theorem}[Almost Pareto-optimal Non-Uniform Language Generation]
    \label{thm:informal-almost-Pareto-optimal-non-uniform-language-generation}
    Given a collection $\mcC=(L_1,L_2,\ldots)$, for any $n < \infty$, there exists an algorithm that non-uniformly generates from $\mcC$, and its sequence of generation times $t(L_1),t(L_2),\ldots$ for languages in $\mcC$ satisfies the following: any other algorithm whose generation time for some language $L_i$ for $i \le n$ is smaller than $t(L_i)$, must satisfy that its generation time for some other language $L_j$ (also for $j \le n$) is larger than $t(L_j)$.
\end{theorem}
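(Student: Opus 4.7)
My plan is to lift the insertion-sort construction of the canonical Pareto-optimal sequence $t^\star$ (described earlier in the introduction) into an explicit online generation algorithm. Fix $n < \infty$; the construction yields a permutation $\pi$ of $\{1,\ldots,n\}$ that orders $L_1,\ldots,L_n$ according to the designated comparator, together with the Pareto-optimal times $t^\star(L_1),\ldots,t^\star(L_n)$. My aim is to exhibit an algorithm whose realized generation times on this prefix match $t^\star$ exactly, while still achieving non-uniform generation on the full collection $\mcC$.

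The algorithm is defined as follows. At each step, let $S_t$ denote the current set of distinct input strings. Scan through $L_{\pi(1)},\ldots,L_{\pi(n)}$ in the canonical order and pick the earliest $L_{\pi(j)}$ satisfying (a) $S_t \subseteq L_{\pi(j)}$ and (b) $|S_t| \ge t^\star(L_{\pi(j)})$; emit any string from $L_{\pi(j)} \setminus S_t$. If no such $L_{\pi(j)}$ exists, defer to an off-the-shelf non-uniform generator (e.g., that of \citet{charikar2024exploring} or \citet{li2024generation}) run on the tail collection $(L_{n+1}, L_{n+2}, \ldots)$, so that non-uniform generation is preserved for target languages outside the prefix.

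The prefix guarantee is the crux. Fix $i \le n$ and assume the target is $L_i$. Once $|S_t| \ge t^\star(L_i)$, conditions (a) and (b) are met by $L_i$ itself, so the algorithm always generates \emph{some} string. The key step is showing that any earlier $L_{\pi(j)}$ selected in its place still emits a string that lies in $L_i$. This is precisely the invariant that the comparator was designed to enforce: if $L_{\pi(j)}$ precedes $L_i$ in the canonical order and $|S_t|$ already meets $t^\star(L_{\pi(j)})$, then the structure established during the insertion-sort step for $L_i$ must guarantee a fresh string in $L_{\pi(j)} \cap L_i \setminus S_t$, since otherwise the Pareto-optimal time for $L_i$ produced by the construction would have been forced to exceed $t^\star(L_i)$, contradicting the construction. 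For the tail case, if the target is $L_k$ with $k > n$, then unless some prefix $L_i$ coincides with $L_k$ (in which case the prefix rule handles it), the input eventually contains a string distinguishing $L_k$ from each $L_i$, $i \le n$; thereafter condition (a) fails for every prefix language and the fallback takes over, inheriting its non-uniform generation guarantee.

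The principal obstacle is formalizing the comparator-driven correctness claim: I would have to unpack the exact definition of the insertion-sort comparator from the earlier section and extract from it the statement that ``earlier-ordered consistent languages whose time has arrived are safe to generate from when the true target is a later-ordered language,'' which is essentially equivalent to the Pareto-optimality of $t^\star$ on the prefix. A secondary but expected-to-be-routine step is ruling out that the threshold condition (b) itself prevents timely generation when the target is $L_{\pi(j)}$; this follows because once $|S_t|\ge t^\star(L_{\pi(j)})$ both (a) and (b) hold for $L_{\pi(j)}$, and any earlier $L_{\pi(j')}$ that overrides it is governed by the same safety invariant just described.
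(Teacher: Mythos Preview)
Your algorithm has a genuine gap at the ``emit any string from $L_{\pi(j)}\setminus S_t$'' step. Committing to the first consistent prefix language and emitting an arbitrary unseen string from it does not guarantee that the string lies in the target. Concretely, take $L_1=\{1,2,3\}\cup\{10,20,30,\dots\}$, $L_2=\{1,2,3\}\cup\{11,21,31,\dots\}$, and $L_3=L_1\cup L_2$. Running the insertion-sort procedure on $(L_1,L_2,L_3)$ gives $m^\star(L_1)=m^\star(L_3)=0$, $m^\star(L_2)=3$, sorted order $(L_3,L_1,L_2)$, and hence $t^\star(L_3)=t^\star(L_1)=1$, $t^\star(L_2)=4$. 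If the target is $L_1$ and the adversary first inputs $10$, your scan selects $L_3$ (first in order, consistent, threshold $1$ met) and may emit $11\in L_3\setminus L_1$, violating the required guarantee $t(L_1)=1$. Worse, since $L_1\subseteq L_3$, your scan selects $L_3$ at \emph{every} step when the target is $L_1$, so the algorithm need never generate correctly from $L_1$. Your justification that ``there exists a fresh string in $L_{\pi(j)}\cap L_i\setminus S_t$'' is true here but unusable: the algorithm does not know $i$ and cannot aim for that intersection. The paper's algorithm avoids exactly this by not committing to a single language: it traverses the sorted list, greedily accumulating \emph{all} consistent languages into a set $I_t$ subject to $\bigcap_{L\in I_t}L$ remaining infinite, and generates from $\bigcap_{L\in I_t}L$; the argmax-maintained invariant (Claim~\ref{claim:argmax-maintained}) is then used to show that the target $L_i$ itself enters $I_t$ once $|S_t|\ge m^\star(L_i)+1$.

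Your tail/fallback split has a related flaw. If $L_k\subsetneq L_i$ for some $k>n$ and $i\le n$, an enumeration of $L_k$ never produces a string outside $L_i$, so condition (a) never fails for $L_i$ and the fallback is never invoked; your prefix rule then keeps emitting from some $L_{\pi(j)}$, which need not be contained in $L_k$. The paper sidesteps this by letting the number of languages under consideration grow with $t$ (via $f$ with $f(t)\to\infty$) rather than fixing a prefix and handing off to a separate generator on the tail.
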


We also identify a technical condition on a collection that is sufficient to guarantee that the algorithm from the theorem above attains Pareto-optimality for the \textit{entire} collection (\Cref{thm:Pareto-optimality-guarantee}); that is, its sequence of generation times \textit{entirely} matches the Pareto-optimal sequence given by our constructive process, as opposed to matching it on just a finite prefix. Could the sufficient condition also be necessary? %
Is it possible to
achieve Pareto-optimal generation times for all countable collections? Towards this, we identify two simple collections that do not satisfy our sufficient condition (\Cref{claim:Pareto-impossibility-non-uniform} and \Cref{claim:Pareto-impossibility-uniform}) %
for which
it is \textit{impossible} for any algorithm to attain a Pareto-optimal sequence of generation times for the entire collection. 
We leave open the tantalizing question of obtaining an exact characterization of collections that admit Pareto-optimal non-uniform generation for future work.

Finally, we demonstrate that our framework above provides a \textit{generic blueprint} for obtaining Pareto-optimal algorithms, even under other specialized settings requiring stronger guarantees than standard non-uniform generation. Namely, we study the recently introduced settings of \textit{noisy} \citep*{raman2025generation} and \textit{representative} \citep*{peale2025representative} generation. 
We show how suitably modifying the comparator in our insertion sorting process makes both these settings fit into our framework, allowing us to obtain almost Pareto-optimal non-uniform generation algorithms for both these settings. Lastly, we derive sufficient conditions as above for when our algorithms attain exact Pareto-optimality in these settings.

\section{Preliminaries}
\label{sec:prelims}

We first describe the setup of language generation in the limit introduced by \citet{kleinberg2024language}. In this model, there is an underlying countable collection of languages, where each language is a countably infinite subset of a countably infinite universe $U$. %
We use the notation $(L_1,L_2,\ldots)$ to denote the languages in a countable collection $\mcC$; we use this notation, in lieu of curly brackets $\{\}$, to emphasize that the ordering is important, and that there might be duplicates of a language at multiple indices.%

An \textit{enumeration} of a language $L$ comprises of a sequence $x_1,x_2,\ldots$ of strings (repetitions allowed) such that $x_t \in L$ for every $t \ge 1$, and furthermore, for every $x \in L$, there exists a finite $t$ such that $x_t = x$. The generation in the limit game operates as follows: an adversary chooses a language from the known collection $\mcC$, and proceeds to enumerate it in a worst-case fashion. At each time step $t$, after having observed the sequence $x_1,\ldots,x_t$ enumerated so far by the adversary, the task of a generating algorithm is to generate a string $z_t$. Note that the input does not comprise of any strings outside the chosen language, and the algorithm gets no feedback about the string it generates at any time step. We denote the set of \textit{distinct} strings in the input sequence $x_1,\ldots,x_t$ enumerated up until time $t$ by $S_t$. The success criterion for the algorithm is defined as follows:

\begin{definition}[Generation in the Limit \citep{kleinberg2024language}]
    \label{def:language-generation-in-the-limit}
    A generating algorithm algorithm $\mcG$ generates in the limit from a collection $\mcC=(L_1,L_2,\ldots)$ if for every $L \in \mcC$ and every enumeration $\sigma$ of $L$, there exists a finite $t(L, \sigma)$\footnote{We suppress the implicit dependence on the underlying collection $\mcC$ for convenience.} such that for every $t \ge t(L, \sigma)$, the string $z_t$ generated by the algorithm at time step $t$ belongs to $L \setminus S_t$.
\end{definition}

The collection above plays the role of the ``function class'' from which a generating algorithm is trying to learn. For example, we can imagine that each language in the collection represents the set of strings that a language model may generate corresponding to a given configuration of its parameters, and as more inputs get revealed with time, the model tunes its parameters, simultaneously moving around in the collection.

The general positive result of \cite{kleinberg2024language} shows that generation in the limit can achieved for \textit{every} countable collection of languages. Observe however that the time for valid generation in the definition above, namely $t(L,\sigma)$, is a function of both the target language $L$ as well as the enumeration order $\sigma$. The notion of non-uniform language generation tries to get rid of the dependence on the enumeration order $\sigma$ in the success time $t(L, \sigma)$.

\begin{definition}[Non-uniform Generation \citep{li2024generation}]
    \label{def:non-uniform-generation}
    A generating algorithm algorithm $\mcG$ non-uniformly generates in the limit from a collection $\mcC=(L_1,L_2,\ldots)$ if for every $L \in \mcC$, there exists a $t(L)$ such that for every enumeration $\sigma$ of $L$ presented to $\mcG$, the string $z_t$ generated by the algorithm at time step $t$ belongs to $L \setminus S_t$ for all $t$ satisfying $|S_t| \ge t(L)$.
\end{definition}

As it turns out, the stronger guarantee of non-uniform generation in the limit can also be achieved for every countable collection of languages, as shown by \citet{li2024generation} and \citet{charikar2024exploring}. It helps to think about the language-dependent success time $t(L)$ as measuring how easy it is to discern $L$ from other languages in the collection: the larger $t(L)$ is, the larger the amount of resources---number of inputs, computation---that an algorithm must invest on language $L$ in order to generate validly from it. We can then ask: what is the minimal amount of resources that can be assigned to every language in the collection? Towards this, we propose the following definition of Pareto-optimality:

\begin{definition}[Pareto-optimality]
    \label{def:Pareto-optimal-generation-times}
    A sequence $t(L_1), t(L_2),\ldots$ for languages in $\mcC$ is Pareto-optimal if any algorithm $\mcG$ that non-uniformly generates from $\mcC$ and satisfies that $t_\mcG(L_i) < t(L_i)$ for some $i$, must satisfy that its generation time for some other $L_j$ satisfies $t_\mcG(L_j) > t(L_j)$
\end{definition}

In our study of this guarantee, the notion of \textit{finite intersections} between languages will play a key role. A subcollection $\mcC'$ of $\mcC$ has finite intersection if $|\cap_{L \in \mcC'}L| < \infty$. The subcollection $\mcC'$ of $\mcC$ that has \textit{largest} finite intersection satisfies
\begin{align*}
    \mcC' \in \argmax_{\mcC''}\{|\cap_{L \in \mcC''}L|: \mcC'' \subseteq \mcC, |\cap_{L \in \mcC''}L|<\infty\}.
\end{align*}

\subsection{Prior Guarantees}
\label{appsec:example}

We now formally review the non-uniform generation guarantees of the algorithms given by \cite{li2024generation} and \cite{charikar2024exploring}. We will specify the generation time required by these algorithms for each language $L_i$ in the collection $\mcC=(L_1,L_2,\ldots)$.

\paragraph{Guarantee of \cite{li2024generation}.} Define the \textit{closure dimension} of a collection $\mcC$ to be the size of the largest finite intersection of a subcollection of $\mcC$:\footnote{\cite{li2024generation} originally define the closure dimension as the size of the largest set $T$ such that $|\cap_{L \in \mcC, T \subseteq L}L| < \infty$; the definition stated here is equivalent.}
\begin{align}
    \label{def:closure-dim}
    d := \max_{\mcC'}\{|\cap_{L \in \mcC'}L|: \mcC' \subseteq \mcC, |\cap_{L \in \mcC'}L|<\infty\}\footnotemark,
\end{align}
\footnotetext{For concreteness, we define the max over an empty set to be 0, and the arg max to be null.}
For every language $L_i$, consider the closure dimension $d_i$ of the prefix collection $(L_1,\ldots,L_i)$.
Observe that the sequence $\{d_i\}_{i \ge 1}$ is non-decreasing. The guarantee for the algorithm given by 
\cite{li2024generation} is as follows: If $\lim_{i \to \infty}d_i = \infty$, then the generation time for each language is $t(L_i)=d_i+1$. If $\lim_{i \to \infty}d_i=c < \infty$, then $t(L_i)=\max(i, c+1)$.

\paragraph{Guarantee of \cite{charikar2024exploring}.} For any $L_i$, consider the largest finite intersection of a subcollection of $(L_1,\dots,L_i)$ \textit{that contains} $L_i$, called the \textit{non-uniform complexity} of $L_i$:
    \begin{align}
        \label{eqn:non-uniform-complexity-prefix}
        m(L_i) := \max_{\mcC'}\{|\cap_{L \in \mcC'}L|: \mcC' \subseteq (L_1,\ldots,L_i), \mcC' \ni L_i, |\cap_{L \in \mcC'}L|<\infty\}.
    \end{align}
Notice that the difference in $d_i$ and $m(L_i)$ above %
is that the latter quantity is computed as the maximum over a smaller set (due to the additional condition that $\mcC' \ni L_i$); hence, $m(L_i) \le d_i$. The guarantee given by Theorem 6 in \cite{charikar2024exploring} is that $t(L_i)=\max(i, m(L_i)+1)$.

We also observe that the generation time guarantee of \cite{charikar2024exploring} for the language in the \textit{first} position in $\mcC$ is \textit{always} 1, which is the smallest achievable generation time. %
So, for any other algorithm, if the generation time for some language $L$ in the collection is larger than 1, we can place $L$ at the start of $\mcC$, and run the algorithm of \cite{charikar2024exploring} to get an improved generation time of 1 for $L$. 
Thus, no algorithm can simultaneously achieve the smallest possible generation time for \textit{every} language $L$, unless its generation time for every language is 1.
\subsection{Other Related Work}
\label{sec:related-work}

The work of \cite{kleinberg2024language} also initiated the study of language generation in the limit with a \textit{uniform} guarantee, albeit for finite collections. In uniform generation, the generation time of an algorithm must be a global quantity for the collection, irrespective of the language being enumerated by the adversary. Recall that in non-uniform generation, the generation time may depend on the language being enumerated (but in neither setting may it depend on the enumeration order). \cite{li2024generation} formalized uniform generation more generally for collections that are not necessarily finite, and obtained a quantitatively tight characterization of the uniform generation time for a collection in terms of its closure dimension. Both these works also study language generation in the limit with a prompt. More recently, \cite{hanneke2025union} and \cite{bai2025language} established results on the behavior of non-uniform generation under unions of language collections. \cite{bai2025language} also study several other variants of generation, including generation with feedback, which was introduced by \cite{charikar2024exploring}.

Besides uniform/non-uniform generation, there has also been interest in understanding the \textit{breadth} of the target language covered by algorithms that generate in the limit. In particular, \cite{charikar2024exploring}, \cite{kalavasis2024limits} and \cite{kalavasis2024characterizations} relate definitions of generating with breadth to language identification, and establish strong negative results for the same. Very recently, \cite{kleinberg2025density} proposed more fine-grained \textit{density} measures for breadth. Their work proposes new algorithms for generation with a view to maximize these density measures, and has remarkable positive results.

\section{Pareto-optimal Non-uniform Generation}
\label{sec:Pareto-optimal-non-uniform}

We begin with a motivating example of a simple collection $\mcC$ for which the guarantees for both the algorithms of \cite{li2024generation} and \cite{charikar2024exploring} get Pareto-dominated by the latter algorithm %
when run on a suitable \textit{reordering} of $\mcC$.

\begin{restatable}{example}{exampleparetosuboptimal}
    \label{example:lrt-cp-suboptimal}
    Consider the collection $\mcC=(L_1,L_2,L_3,\ldots)$, where
    \begin{align*}
        &L_1 = \{1,\dots,99,100\} \cup \{-p_1, -p_1^2, -p_1^2, -p_1^3, \ldots\} \\
        &L_2 = \{1,\ldots,99,100\} \cup \{101,\ldots,199,200\} \cup \{201,\ldots,299,300\} \cup \{-p_2, -p_2^2, -p_2^2, -p_2^3, \ldots\} \\
        &L_3 = \{101,\ldots,199,200\} \cup \{-p_3, -p_3^2, -p_3^2, -p_3^3, \ldots\} \\
        &L_4 = \{201,\ldots,299,300\} \cup \{-p_4, -p_4^2, -p_4^2, -p_4^3, \ldots\} \\
        &L_i = \{-p_i, -p_i^2, -p_i^2, -p_i^3, \ldots\}, \quad \forall i \ge 5.
    \end{align*}
    Here, $p_n$ refers to the $n^\text{th}$ prime number. 
\end{restatable}

Denoting the closure dimension of the prefix $(L_1,\dots,L_i)$ by $d_i$ as in \Cref{appsec:example}, we observe that $d_1=0$, and $d_i =100$ for $i \ge 2$ (and hence $\lim_{i \to \infty}d_i=100$). The guarantee of \cite{li2024generation} then gives
\begin{align*}
    &t(L_i) = 101 \quad \text{for }1 \le i \le 100, \quad
    \text{ and } \quad t(L_i) = i, \; \forall i \ge 101.
\end{align*}
Similarly, we observe that $m(L_1)=0$, $m(L_2)=100$, $m(L_3)=100$, $m(L_4)=100$, and $m(L_i)=0$ for $i \ge 5$. The guarantee of \cite{charikar2024exploring} then gives
\begin{align*}
    &t(L_1) = 1, t(L_2)=101, t(L_3)=101, t(L_4)=101, \quad \text{ and } \quad t(L_i) = i, \; \forall i \ge 5.
\end{align*}
But now, consider reordering the languages in $\mcC$, so that $L_2$ is \textit{after} $L_3$ and $L_4$. That is, set $\mcC = (L_1,L_3,L_4,L_2,L_5,L_6,\ldots)$, and let us now consider the guarantee of \cite{charikar2024exploring} for this reordered collection. We obtain that $m(L_1)=0$, $m(L_3)=0$, $m(L_4)=0$, $m(L_2)=100$, and $m(L_i)=0$ for $i \ge 5$. So, we obtain the following generation times:
\begin{align*}
    &t(L_1) = 1, t(L_2)=101, t(L_3)=1, t(L_4)=1, \quad \text{ and } \quad t(L_i) = i, \; \forall i \ge 5.
\end{align*}
Observe that this sequence of generation times is \textit{strictly better} than both the above guarantees; in particular, the generation time for $L_3$ and $L_4$ is strictly smaller, while the generation time for any other language is no larger. The example above can be easily modified to an instance where $\lim_{i \to \infty}d_i = \infty$, or to make the suboptimality appear more extreme.

The structure in the example above %
motivates the following procedure for constructing a Pareto-optimal sequence of generation times.

\subsection{Pareto-optimal Sequence of Generation Times}
\label{subsec:Pareto-optimal-sequence}

We specify Procedure \ref{proc:non-uniform} which computes a non-uniform complexity $m^\star(L_i)$ for every $i \geq 1$.
In the $i^\text{th}$ iteration of this procedure, we determine $m^\star(L_i)$, and also a subcollection $\mcC(L_i)$, which is a ``witness'' for $m^\star(L_i)$. %
In the process, we maintain an ordering $\mcC'_i$ of the first $i$ languages in $\mcC$ (which may be different from the way these languages are ordered in $\mcC$). This ordering $\mcC'_i$ is maintained using insertion sort: in the $i^\text{th}$ iteration, we first append the $i^\text{th}$ language $L_i$ in $\mcC$ to the ordering $\mcC'_{i-1}$ of the first $i-1$ languages that was determined in the previous iterations. Then, we advance the position of $L_i$ in the ordering one position at a time until a certain local condition is satisfied.

\renewcommand{\figurename}{Procedure}
\begin{figure}[t]
    \begin{framed}
    \centering \textbf{Procedure for Computing Non-uniform Complexities $m^\star(L_i)$}
    \begin{enumerate}[leftmargin=0.2cm]
        \item Initialize $\mcC'_0=()$.
        \item For $i=1,2,3,\ldots$
        \begin{enumerate}[leftmargin=0.5cm]
            \item Initialize $\mcC'_i$ by appending $L_i$ at the end of $\mcC'_{i-1}$, and denote the resulting sequence $\mcC'_i$ as $(L'_1,\ldots,L'_i)$. Note that this is simply a permutation of $(L_1,\ldots,L_{i})$. %
            We want to run one iteration of ``insertion sort" so that $L_i$ is at its correct position in $\mcC'_{i}$. Initialize $j=i$.
            \item While $\true$ do:
            \begin{enumerate}[leftmargin=0.5cm]
                \item  Let $\mcC_{\chk}$ be the subcollection of $(L'_1,\ldots,L'_{j})$ that has the largest finite intersection among subcollections that contain $L'_j$. %
                If there is no such subcollection, set $\mcC_{\chk}=()$, %
                $m_{\chk}=0$. Else, set $m_{\chk} = |\cap_{L \in \mcC_{\chk}}L|$. %
                \item If $j \le 1$ or $m_{\chk} > m^\star(L'_{j-1})$, break. %
                \item Swap $L'_{j}$ and $L'_{j-1}$ in $\mcC'_i$.
                \item %
                $j \leftarrow j-1$
            \end{enumerate}
            \item Set $\mcC(L_i)=\mcC_{\chk}$, $m^\star(L_i)=m_{\chk}$.
        \end{enumerate}
    \end{enumerate}
    \end{framed}
    \caption{Insertion Sort for Non-uniform Generation}
    \label{proc:non-uniform}
\end{figure}

Some remarks about the procedure are in order. We note that $L'_j$ is always equal to $L_i$ at the beginning of every iteration of the while loop. For every $i$, $\mcC(L_i)$ and $m^\star(L_i)$ are assigned once and for all at the end of the $i^\text{th}$ iteration of the for loop; $\mcC(L_i)$ is set to some subcollection of $(L_1,\ldots,L_i)$ that contains $L_i$ and has finite intersection (if it is non-empty), and $m^\star(L_i)=|\cap_{L \in \mcC(L_i)}L|$. %
Furthermore, at the end of the $i^\text{th}$ iteration of the procedure, insertion sort guarantees that the sequence $m^\star(L'_1), m^\star(L'_2),\ldots,m^\star(L'_i)$ is non-decreasing. However, the sequence $m^\star(L_1), m^\star(L_2),m^\star(L_3),\ldots$ is \textit{not} necessarily non-decreasing. We also observe the following: %

\begin{observation}[Languages in witness have smaller complexity]
    \label{observation:S-contains-smaller-complexity}
    For every $i \ge 1$, when $\mcC(L_i)$ and $m^\star(L_i)$ are determined in Step(c) above, $\mcC(L_i)$ is either empty, or contains at least one $L_j \neq L_i$ for $j < i$, and furthermore, every $L_j \neq L_i$ in $\mcC(L_i)$ satisfies that $m^\star(L_j) < m^\star(L_i)$.
\end{observation}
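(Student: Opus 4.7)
The plan is to leverage the insertion-sort invariant built into Procedure \ref{proc:non-uniform} to prove both halves of the observation. First, I would handle the ``empty or contains some $L_k \neq L_i$ with $k < i$'' part by a quick sanity check: since $\mcC(L_i) = \mcC_{\chk}$ is a subcollection of $(L'_1,\ldots,L'_j) \subseteq (L_1,\ldots,L_i)$, every element other than $L_i$ automatically has index strictly less than $i$; and the singleton $\{L_i\}$ cannot be the witness, because $|L_i|=\infty$ would violate the finite-intersection requirement built into the definition of $\mcC_{\chk}$. Hence if $\mcC(L_i)$ is nonempty, it must contain some $L_k$ with $k < i$.

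For the strict inequality $m^\star(L_k) < m^\star(L_i)$, I would first establish the following invariant by induction on $i$: at the end of the $i^\text{th}$ outer iteration, the complexities along $\mcC'_i = (L'_1,\ldots,L'_i)$ are non-decreasing, i.e., $m^\star(L'_1) \leq m^\star(L'_2) \leq \ldots \leq m^\star(L'_i)$. The inductive step is the standard verification that insertion sort maintains monotonicity for the comparator used in Step (b)(ii): while $L_i$ is bubbled leftward, the entries in positions $1,\ldots,j-1$ are untouched and coincide (in order) with the first $j-1$ entries of $\mcC'_{i-1}$, which are non-decreasing by the inductive hypothesis. The exit condition of the while loop ensures that either $j=1$, or $m_{\chk} > m^\star(L'_{j-1})$, so $L_i$ settles exactly at the position where monotonicity continues to hold.

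With this invariant in hand, the second assertion is immediate. Suppose $L_k \neq L_i$ lies in $\mcC(L_i)$; then $L_k$ occupies some position in $\{L'_1,\ldots,L'_{j-1}\}$, which in particular forces $j \geq 2$, so the while loop cannot have broken via the $j \leq 1$ clause. The loop must therefore have broken via $m_{\chk} > m^\star(L'_{j-1})$, giving $m^\star(L_i) = m_{\chk} > m^\star(L'_{j-1})$. Monotonicity on the inherited prefix then yields $m^\star(L_k) \leq m^\star(L'_{j-1}) < m^\star(L_i)$, as required.

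The only real obstacle is the bookkeeping around the indexing: one must be explicit that at the moment of breaking the while loop, the prefix $(L'_1,\ldots,L'_{j-1})$ of the current $\mcC'_i$ is literally the same (as an ordered sequence) as the prefix of the previously finalized $\mcC'_{i-1}$, since insertion sort touches only $L_i$ and never reorders the earlier entries. Once this identification is stated carefully, the monotonicity invariant propagates cleanly from one iteration to the next, and both parts of the observation follow without further effort.
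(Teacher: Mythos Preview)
Your argument is essentially the paper's own: both invoke the insertion-sort monotonicity invariant on the prefix $(L'_1,\ldots,L'_{j-1})$ together with the exit condition $m_{\chk} > m^\star(L'_{j-1})$ to push the strict inequality down to every $L_k\neq L_i$ in the witness. One minor slip in your first part: ruling out only the \emph{singleton} $\{L_i\}$ is not quite enough, since the collection may contain duplicate copies of $L_i$ at earlier indices; you need the paper's phrasing that $\mcC(L_i)$ ``can't only contain copies of $L_i$'' (same infinite-intersection reasoning), which then yields an $L_k$ with $k<i$ \emph{and} $L_k\neq L_i$ as sets, matching what the observation actually asserts.
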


\begin{proof}
    Suppose $\mcC(L_i)$ is not empty when it is determined. Note that this must mean that $j > 1$ when the while loop was exited. Note also that $\mcC(L_i)$ can't only contain copies of $L_i$,  %
    since in this case, the intersection of languages in $\mcC(L_i)$ (simply $L_i$) would be infinite (by assumption that all languages are infinite). Observe then that by virtue of the while loop exiting, it must have been the case that $m_{\chk} > m^\star(L'_{j-1})$. But recall that $m^\star(L'_{1}),\ldots,m^\star(L'_{j-1})$ is a non-decreasing sequence. The claim then follows since any language $L_j$ in $\mcC(L_i)$ (which, recall, is assigned to be $\mcC_{\chk}$) that is not equal to $L_i$ is contained in $(L'_1,\ldots,L'_{j-1})$, and the fact that we assign $m^\star(L_i)=m_{\chk}$.
\end{proof}

This observation ensures that $m^\star(\cdot)$ forms a Pareto-optimal sequence of generation times.

\begin{claim}[$m^\star(\cdot)$ forms a Pareto-optimal sequence]
    \label{claim:non-uniform-generation-lb}
    Any algorithm $\mcG$ that satisfies $t_{\mcG}(L_i) < m^\star(L_i)+1$ %
    for some $L_i$ must have $t_{\mcG}(L_j) > m^\star(L_j)+1$ for some $j < i$.
\end{claim}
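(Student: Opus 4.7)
The plan is to argue by contradiction. Suppose for contradiction that there is an algorithm $\mcG$ with $t_\mcG(L_i) \le m^\star(L_i)$ and simultaneously $t_\mcG(L_j) \le m^\star(L_j)+1$ for every $j < i$. I would construct an adversarial enumeration on which $\mcG$ is forced to produce an invalid generation, contradicting \Cref{def:non-uniform-generation}. The case $m^\star(L_i)=0$ is vacuous since generation times are positive, so I can assume $m^\star(L_i) \ge 1$, in which case $\mcC(L_i)$ is non-empty and, by \Cref{observation:S-contains-smaller-complexity}, contains some $L_j \ne L_i$ with $j < i$ and $m^\star(L_j) < m^\star(L_i)$. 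Combined with the contradictory assumption, this yields $t_\mcG(L_j) \le m^\star(L_j)+1 \le m^\star(L_i)$. Hence every language $L \in \mcC(L_i)$ (including $L_i$ itself) satisfies $t_\mcG(L) \le m^\star(L_i)$.

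Next, I would define $I = \bigcap_{L \in \mcC(L_i)} L$, which by construction of $\mcC(L_i)$ has $|I| = m^\star(L_i) < \infty$. Let the adversary first enumerate the strings of $I$ in some fixed order, and only then continue with the remaining strings of its chosen target. At the time step $t = |I| = m^\star(L_i)$, the set of distinct strings seen by $\mcG$ is exactly $S_t = I$, and this prefix is a valid initial segment of an enumeration of \emph{every} $L \in \mcC(L_i)$ (since $I \subseteq L$ for all such $L$). The string $z_t$ produced by $\mcG$ at time $t$ is determined by $S_t$ alone, so it is the same no matter which $L \in \mcC(L_i)$ the adversary has committed to.

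Now comes the punchline: for each $L \in \mcC(L_i)$, we have $|S_t| = m^\star(L_i) \ge t_\mcG(L)$, so by the non-uniform generation guarantee, $z_t \in L \setminus S_t$. Taking the intersection over all $L \in \mcC(L_i)$ yields
\[
z_t \in \Bigl(\bigcap_{L \in \mcC(L_i)} L\Bigr) \setminus I \;=\; I \setminus I \;=\; \emptyset,
\]
a contradiction. This shows the joint assumption cannot hold; hence either $t_\mcG(L_i) > m^\star(L_i)$ (ruled out by hypothesis), or some $L_j \in \mcC(L_i)\setminus\{L_i\}$ has $t_\mcG(L_j) > m^\star(L_i) \ge m^\star(L_j)+1$, and \Cref{observation:S-contains-smaller-complexity} guarantees $j < i$, which is exactly the claimed conclusion.

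The main obstacle, and the point to get right, is that a single string $z_t$ must serve as a valid generation for every language in $\mcC(L_i)$ simultaneously. This rests on two facts which I would state carefully: that $\mcG$'s output depends only on the input prefix (so the same $z_t$ is produced regardless of the adversary's secret target), and that our assumption drives the generation threshold of every language in the witness subcollection below $|I|$—so the non-uniform guarantee applies in every case simultaneously, collapsing $z_t$ into the empty set $I\setminus I$.
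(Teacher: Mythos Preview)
Your proof is correct and follows essentially the same approach as the paper's: enumerate the finite intersection $I$, then invoke the non-uniform guarantee on each $L\in\mcC(L_i)$ simultaneously to force $z_t\in I\setminus I=\emptyset$. One minor wording issue: $z_t$ is determined by the input \emph{sequence} $(x_1,\ldots,x_t)$ rather than by the set $S_t$ alone, but since you fix a single enumeration of $I$ as the common prefix, your conclusion that $z_t$ is independent of the adversary's secret target still holds.
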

\begin{proof}
    Note that the language $L_i$ must be such that $m^\star(L_i) > 0$, since $t_{\mcG}(\cdot)$ is always at least 1.%
    But this means that $\mcC(L_i) \neq ()$. Then, from \Cref{observation:S-contains-smaller-complexity}, it must be the case that at least some $L_{j} \in \mcC(L_i)$ is not equal to $L_i$, where $j < i$. Suppose an adversary enumerates all the $m^\star(L_i)$ strings in $I=\cap_{L \in \mcC(L_i)}L$ in the first $m^\star(L_i)$ time steps. Let $z$ be the string generated by the algorithm $\mcG$ at $t=m^\star(L_i)$. 
    By the guarantee of $\mcG$ that $t_\mcG(L_i) < m^\star(L_i)+1$, it must be the case that $z \in L_i \setminus I$ (otherwise, the adversary can continue to enumerate the rest of $L_i$ beyond this point, and we would have obtained an enumeration of $L_i$ for which $\mcG$ violates its guarantee for $L_i$ at $t=m^\star(L_i)$). But since we have enumerated every string in the intersection of languages in $\mcC(L_i)$, it must be the case that $z \notin L_{j}$ for some $L_{j} \in \mcC(L_i)$ where $L_{j} \neq L_i$ and $j < i$. But note also that from \Cref{observation:S-contains-smaller-complexity}, $m^\star(L_{j}) < m^\star(L_i)$. That is, if the adversary were to continue enumerating $L_{j}$ beyond $t$, and $t_{\mcG}(L_{j})$ were at most $m^\star(L_{j})+1$, which is at most $m^\star(L_i)$, then the string $z$ generated by $\mcG$ at this time should have belonged to $L_{j}$, which we know is not the case. Thus, $t_\mcG(L_j) > m^\star(L_j)+1$.
\end{proof}

\subsection{(Almost) Pareto-optimal Non-uniform Generation Algorithm}
\label{subsec:Pareto-optimal-algorithm}

While Procedure \ref{proc:non-uniform} specifies a sequence of Pareto-optimal generation times, it also motivates a corresponding algorithm that \textit{almost} attains this sequence. To begin, let us think about what happens when we swap $L'_j$ and $L'_{j-1}$ within an iteration of insertion sort. Note that after moving $L'_{j}$ to position $j-1$, the collection $\mcC_{\chk}$ is obtained as the max over subcollections that are \textit{strictly} contained among the subcollections under consideration before moving; hence, it is clear that $m_{\chk}$ can only decrease. More importantly, we also have the following nice property, which turns out to be crucial for the correctness of the algorithm we state below: if one were to compute $m_{\chk}$ for $L'_{j-1}$ after it has been moved forward to position $j$, it remains equal to $m^\star(L'_{j-1})$, \textit{even if} $m_{\chk}$ is now a maximum over a strictly larger space---this is a consequence of the while loop not exiting.

\begin{claim}[Arg max Maintained]
    \label{claim:argmax-maintained}
    For every $i \ge 1$, and for every $t \ge i$, suppose $k$ is the index of language $L_i$ in $\mcC'_t=(L'_1,\ldots,L'_t)$ at the end of the $t^\text{th}$ iteration of the for loop. Then,
    \begin{align}
        \label{eqn:argmax-maintained}
        \mcC(L_i) \in \argmax_{\mcC_{\chk}}\{|\cap_{L \in \mcC_\chk}L|:\mcC_{\chk} \subseteq (L'_1,\ldots,L'_k), \mcC_{\chk} \ni L'_k, \left|\cap_{L \in \mcC_{\chk}}L\right|<\infty\}.
    \end{align}
\end{claim}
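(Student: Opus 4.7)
The plan is to prove the claim by induction on $t \geq i$, tracking how the prefix $(L'_1,\ldots,L'_k)$ ending at $L_i$ evolves as new languages are inserted. The base case $t=i$ is immediate: at the end of iteration $i$, Step 2(b)(i) explicitly set $\mcC_\chk$ (which becomes $\mcC(L_i)$) to realize the argmax over subcollections of $(L'_1,\ldots,L'_j)$ containing $L'_j = L_i$, and $k=j$ at that moment.

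For the inductive step, I would first make a simple structural observation: within any single iteration $s$ of the for loop, only the newly-appended language $L_s$ moves, so the relative order of all previously inserted languages is preserved. Consequently, if $L_j$ precedes $L_i$ in $\mcC'_s$, then it continues to precede $L_i$ in every $\mcC'_{s'}$ for $s' \geq s$. Applied to $\mcC(L_i)$, which was a subcollection of $\mcC'_i$ consisting entirely of languages at positions $\leq$ the position of $L_i$, this ensures $\mcC(L_i) \subseteq (L'_1,\ldots,L'_k)$ for all $t \geq i$, so $\mcC(L_i)$ is at least a valid candidate for the argmax.

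The heart of the argument is to show no other candidate beats it. Assume the claim at the end of iteration $t-1$, with $L_i$ sitting at some position $p$ in $\mcC'_{t-1}$. Two cases:
\begin{enumerate}[leftmargin=0.5cm]
    \item[(Case 1)] $L_t$ ends up at a position $> p$ in $\mcC'_t$. Then the multiset $\{L'_1,\ldots,L'_k\} = \{L'_1,\ldots,L'_p\}_{t-1}$ is unchanged, so the argmax property transfers directly from the inductive hypothesis.
    \item[(Case 2)] $L_t$ ends up at a position $\leq p$, so $k = p+1$, and $\{L'_1,\ldots,L'_k\}$ is exactly the old prefix of length $p$ together with $L_t$. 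Any subcollection $\mcC''$ of $(L'_1,\ldots,L'_k)$ containing $L_i$ with finite intersection is either (a) a subcollection of the old prefix not containing $L_t$---handled by the inductive hypothesis---or (b) it contains $L_t$. For (b), consider the moment during iteration $t$ when $L_t$ swapped past $L_i$: at that instant $L_t$ sat at position $p+1$ directly after $L_i$ at position $p$, and the while loop computed $m_\chk$ as the largest finite intersection among subcollections of the current prefix of length $p+1$ that contain $L_t$. The fact that the loop did \emph{not} exit forces $m_\chk \leq m^\star(L_i)$. Since the current prefix of length $p+1$ at that moment equals $(L'_1,\ldots,L'_k)$ as a set, $\mcC''$ is one of those candidates, yielding $|\cap_{L \in \mcC''}L| \leq m^\star(L_i) = |\cap_{L \in \mcC(L_i)}L|$.
\end{enumerate}

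The main obstacle to keep straight is bookkeeping: identifying the precise instant when $L_t$ and $L_i$ exchange positions, and matching the multiset of languages in the ``current prefix'' at that instant with the multiset $\{L'_1,\ldots,L'_k\}$ at the end of iteration $t$. Once that identification is made, the failure of the loop's exit condition delivers exactly the inequality $m_\chk \leq m^\star(L_i)$ that closes Case 2. Combining both cases with the structural observation above completes the induction.
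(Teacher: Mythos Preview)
Your proposal is correct and follows essentially the same approach as the paper's own proof: induction on $t$, a trivial case when the new language lands to the right of $L_i$, and in the remaining case, invoking the failure of the while-loop exit condition at the instant $L_t$ sits just after $L_i$ to obtain $m_{\chk}\le m^\star(L_i)$ and hence bound any new candidate containing $L_t$. Your additional structural observation about preserved relative order and your explicit split into subcases (a)/(b) are minor elaborations that the paper leaves implicit, but the core argument and the key inequality are identical.
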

\begin{proof}
    For $t=i$, the claim is true by definition of $\mcC(L_i)$. Now suppose as our inductive hypothesis that the claim is true for some $t \ge i$. We will show that it is true for $t+1$. Suppose $k$ is the index of $L_i$ at the end of the $t^\text{th}$ iteration of the for loop. If $k$ continues to be the index of $L_i$ at the end of the $(t+1)^\text{th}$ iteration as well (meaning that the language $L_{t+1}$ was placed at some index larger than $k$ in $\mcC'_{t+1}$), the claim continues to be true simply by the inductive hypothesis. Otherwise, the language $L_{t+1}$ was placed at an index in $\{1,\ldots, k\}$, and so $L_i$ was moved to index $k+1$ in $\mcC'_{t+1}$. In this case, consider the precise iteration of the while loop when $j$ was equal to $k+1$ (so that $L'_{k+1}$ was $L_{t+1}$), and $m_{\chk}$ was found to be at most $m^\star(L'_k)$ in Step ii. At this point, Step iii tells us to swap $L'_k$ and $L'_{k+1}$ (this will make $L'_{k+1}$ equal to $L_i$, and $L'_k$ equal to $L_{t+1}$). Thereafter, note that the (unordered) subcollection of languages before $L'_{k+1}$ remains unchanged until the while loop breaks, and the present iteration of the for loop completes. So, it suffices to argue that, \textit{before} swapping,
    \begin{align*}
        \mcC(L_i) \in \argmax_{\mcC_{\chk}}\{|\cap_{L \in \mcC_\chk}L|:\mcC_{\chk} \subseteq (L'_1,\ldots,L'_k, L'_{k+1}), \mcC_{\chk} \ni L'_k, \left|\cap_{L \in \mcC_{\chk}}L\right|<\infty\}.
    \end{align*}
    Recall that our inductive hypothesis already guarantees that $\mcC(L_i)$ belongs to the arg max over all feasible $\mcC_{\chk}$, where the condition on $\mcC_{\chk}$ only considers subcollections of $(L'_1,\ldots,L'_{k})$. That is, the only additional language now being considered in the condition is $L'_{k+1}$. So, if the max were to increase from what it was when $L'_{k+1}$ was not being considered, it must be the case that the $\mcC_{\chk}$ which is the new $\argmax$ \textit{contains} $L'_{k+1}$. 
    But this means that before swapping, there exists a subcollection of $(L'_1,\ldots,L'_{k}, L'_{k+1})$ that contains $L'_{k+1}$, and has finite intersection of size strictly larger than $|\cap_{L \in \mcC(L_i)}L|$. This contradicts the fact that $m_{\chk}$ was found to be at most $m^\star(L'_k)$ (since $L'_k=L_i$, and $|\cap_{L \in \mcC(L_i)}L|=m^\star(L_i)$). Thus, it must be the case that $\mcC(L_i)$ continues to be in the $\argmax$ even after swapping. This completes the proof of the inductive step.
\end{proof}

\Cref{claim:argmax-maintained} motivates the following algorithm for Pareto-optimal non-uniform generation:

\paragraph{Algorithm.} %
Fix some non-decreasing function $f: \N \to \N$ that satisfies $\lim_{t \to \infty}f(t)=\infty$. At time step $t$, the algorithm follows $f(t)$ iterations %
of Procedure \ref{proc:non-uniform} for the purpose of constructing the ordered collection $\mcC'_{f(t)}=(L'_1,\ldots,L'_{f(t)})$ (a permutation of the first $f(t)$ languages in $\mcC$). %
It then initializes $I_t=()$, and traverses $L'_1,L'_2,\ldots,L'_{f(t)}$ in this order. Whenever it encounters a language $L'_j$ that satisfies $S_t \subseteq L'_j$, it checks if $|\cap_{L \in I_t \cup \{L'_j\}}L|=\infty$. If so, it appends $L'_j$ to $I_t$; %
otherwise, it moves on, leaving $I_t$ unaffected. Observe that at the end of the traversal, $I_t$ is maintained to be a collection of languages that has infinite intersection (if it is non-empty). So, at the end of its traversal, if $I_t$ is empty, the algorithm generates an arbitrary new string from the universe $U$, and if $I_t$ is non-empty, the algorithm generates a new string from $\cap_{L \in I_t}L$. 

We note that the algorithm of \cite{charikar2024exploring} is essentially the algorithm above, with the choice of $f(t)=t$, and $\mcC'_{f(t)}=(L_1,\ldots,L_{f(t)})$ (i.e., the default ordering in $\mcC$). The crucial change is the carefully maintained ordering $(L'_1,\ldots,L'_{f(t)})$. We can use \Cref{claim:argmax-maintained} to derive the following characterization of non-uniform generation times for the algorithm:

\begin{theorem}[Non-uniform Generation Times]
    \label{theorem:non-uniform-generation-ub}
    The algorithm described above non-uniformly generates from the collection $\mcC=(L_1,L_2,\ldots)$, with $t^\star(L_i) = \max(g(i), m^\star(L_i)+1)$, where $g(i)$ is the smallest number $j$ for which $f(j) \ge i$.
\end{theorem}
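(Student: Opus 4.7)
Fix a target language $L_i$, and suppose the adversary enumerates $L_i$. I want to show that whenever $|S_t| \ge t^\star(L_i) = \max(g(i), m^\star(L_i)+1)$, the string $z_t$ that the algorithm generates at time $t$ lies in $L_i \setminus S_t$. Since $|S_t| \le t$, the condition $|S_t| \ge g(i)$ implies $t \ge g(i)$, and by definition of $g$ together with $f$ being non-decreasing, this gives $f(t) \ge i$. Hence $L_i$ appears in the ordered collection $\mcC'_{f(t)} = (L'_1,\ldots,L'_{f(t)})$ that the algorithm constructs; let $k$ be its index, so $L'_k = L_i$.

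The main step is to show that $L_i$ is appended to $I_t$ when it is encountered during the traversal. Suppose for contradiction that it is not. Then at the moment $L'_k$ is processed, the current $I_t$ (call it $J$) satisfies $\bigl|\bigcap_{L \in J \cup \{L_i\}} L\bigr| < \infty$. Every language in $J$ was appended earlier during the traversal of $L'_1,\ldots,L'_{k-1}$, so $J \subseteq (L'_1,\ldots,L'_{k-1})$ and moreover $S_t \subseteq L$ for every $L \in J$ (since this containment is a prerequisite for appending). We also have $S_t \subseteq L_i$ because the adversary enumerates $L_i$. Therefore
\[
S_t \;\subseteq\; \bigcap_{L \in J \cup \{L_i\}} L,
\]
so $|S_t| \le \bigl|\bigcap_{L \in J \cup \{L_i\}} L\bigr|$. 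But $J \cup \{L_i\}$ is a subcollection of $(L'_1,\ldots,L'_k)$ containing $L'_k = L_i$ with finite intersection, so by \Cref{claim:argmax-maintained} applied at iteration $f(t)$ of the for loop, its intersection has size at most $m^\star(L_i)$. Combined with the previous bound this gives $|S_t| \le m^\star(L_i)$, contradicting the hypothesis $|S_t| \ge m^\star(L_i)+1$.

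Consequently $L_i \in I_t$ at the end of the traversal. By construction, $I_t$ is a subcollection with infinite intersection, so $\bigcap_{L \in I_t} L$ is infinite and in particular contains a string not in the finite set $S_t$. The algorithm outputs such a string $z_t$, which lies in $\bigcap_{L \in I_t} L \subseteq L_i$ and is outside $S_t$, as required. This establishes non-uniform generation with generation time at most $\max(g(i), m^\star(L_i)+1)$.

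The one place that requires care is the invocation of \Cref{claim:argmax-maintained}: the bound $m^\star(L_i)$ is only an upper bound on the finite-intersection maxima over prefix subcollections $(L'_1,\ldots,L'_k)$ ending at $L_i$ in the \emph{current} sorted ordering $\mcC'_{f(t)}$, not in the original $\mcC$, and the claim is precisely what guarantees this prefix characterization survives all the swaps performed in later iterations of the insertion sort. Once that invariant is in hand, the rest of the argument is a short case analysis on whether $L_i$ is admitted into $I_t$.
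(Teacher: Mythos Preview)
Your proof is correct and follows essentially the same approach as the paper's: both argue by contradiction that $L_i$ must be added to $I_t$ by observing that otherwise $I_t \cup \{L_i\}$ would be a prefix subcollection with finite intersection of size at least $|S_t| \ge m^\star(L_i)+1$, contradicting \Cref{claim:argmax-maintained}. Your additional closing remark on why \Cref{claim:argmax-maintained} is the crux (the invariant surviving later swaps) is a nice touch that the paper leaves implicit.
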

\begin{proof}
    Suppose that the target language being enumerated by the adversary is $L_i$. Consider any time $t$ for which $|S_t| \ge t^\star(L_i)$; this implies $t \ge t^\star(L_i) \ge g(i)$. Since $f$ is non-decreasing, the language $L_i$ is under consideration by the algorithm at time $t$. So, let $k$ be the index of the language $L_i$ in the collection $\mcC'_{f(t)}=(L'_1,\ldots,L'_{f(t)})$ constructed by the algorithm at time step $t$. We only need to argue that when $L'_k$ is encountered by the algorithm as it traverses $\mcC'_{f(t)}$, it finds that $|\cap_{L \in I_t \cup \{L'_k\}}L|=\infty$ (note that $S_t \subseteq L'_k$ holds, since $L'_k$ is the target language being enumerated). Assume for the sake of contradiction that $|\cap_{L \in I_t \cup \{L'_k\}}L| < \infty$. In this case, observe that $I_t \cup \{L'_k\}$ is a subcollection of $(L'_1,\ldots,L'_k)$ which contains $L'_k$ and has finite intersection. Furthermore, the size of this intersection is at least $m^\star(L_i)+1$ since every language in the subcollection contains $S_t$ by definition of the algorithm. This would imply that the maximum in the RHS of \eqref{eqn:argmax-maintained} in \Cref{claim:argmax-maintained} is at least $m^\star(L_i)+1$, which contradicts $\mcC(L_i)$ realizing the maximum, since $|\cap_{L \in \mcC(L_i)}L| = m^\star(L_i)$. Thus, $L'_k=L_i$ gets added to $I_t$ when it is encountered by the algorithm; since the algorithm generates a new string from $\cap_{L \in I_t}L$ which is ensured to be an infinite set, we are done.
\end{proof}

\begin{remark}
    \label{remark:better-than-cp}
    The guarantee above is strictly no worse than the non-uniform generation guarantee of \cite{charikar2024exploring} (see \Cref{appsec:example}), if we substitute $f(t)=t$. %
\end{remark}

\begin{remark}
    \label{remark:almost-Pareto-optimal}
    Observe that by choosing $f$ to be arbitrarily fast-growing, we can ensure that $t^\star(L_i)=m^\star(L_i) + 1$ for all $i \le n$, for an arbitrarily large $n$. In this sense, and as also stated in \Cref{thm:informal-almost-Pareto-optimal-non-uniform-language-generation}, our algorithm above almost attains Pareto-optimality, since it can match the generation times of a Pareto-optimal sequence (\Cref{claim:non-uniform-generation-lb}) in an arbitrarily large prefix .
\end{remark}

\subsection{Sufficient Condition for Exact Pareto-optimality}
\label{subsec:possibility-of-Pareto-optimality}

We identify a sufficient condition for a collection that guarantees \textit{exact} Pareto-optimality for our algorithm. %
Here,
for an appropriately chosen function $f$,
our algorithm 
matches the Pareto-optimal sequence of generation times identified by the procedure above (\Cref{claim:non-uniform-generation-lb}) at \textit{every} language.

\begin{theorem}[Sufficient Condition for Exact Pareto-optimality]
    \label{thm:Pareto-optimality-guarantee}
    Let $\mcC=(L_1,L_2,\ldots)$ be a collection that satisfies the following property: for every $t \in \N$, $|\{j: m^\star(L_j)+1\le t\}| < \infty$, where the $m^\star(\cdot)$ values are those computed in Procedure \ref{proc:non-uniform}. Then, there exists an algorithm that achieves a Pareto-optimal sequence of non-uniform generation times for $\mcC$.
\end{theorem}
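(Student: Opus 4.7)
The plan is to instantiate the algorithm of \Cref{subsec:Pareto-optimal-algorithm} with a carefully chosen function $f$, and then invoke \Cref{theorem:non-uniform-generation-ub} together with \Cref{claim:non-uniform-generation-lb}. Recall from \Cref{theorem:non-uniform-generation-ub} that the algorithm achieves generation times $t^\star(L_i)=\max(g(i), m^\star(L_i)+1)$, where $g(i)$ is the smallest $j$ with $f(j)\ge i$. By \Cref{claim:non-uniform-generation-lb}, the sequence $m^\star(L_i)+1$ is already Pareto-optimal, so to get exact Pareto-optimality it suffices to choose $f$ so that $g(i)\le m^\star(L_i)+1$ for every $i$. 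Equivalently, we need $f(m^\star(L_i)+1)\ge i$ for every $i$, i.e.~for every $t\in\N$, $f(t)\ge \max\{i: m^\star(L_i)+1\le t\}$.

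This is precisely where the hypothesis of the theorem enters. Define
\[
M(t) \;:=\; \max\{i: m^\star(L_i)+1 \le t\},
\]
with $M(t)=0$ if the set is empty. The assumption that $\{j:m^\star(L_j)+1\le t\}$ is finite for every $t$ is exactly what is needed for $M(t)$ to be well-defined as a finite natural number. Now set
\[
f(t) \;:=\; \max\bigl(t,\; \max_{s\le t} M(s)\bigr).
\]
By construction $f:\N\to\N$ is non-decreasing, and $f(t)\ge t$ forces $\lim_{t\to\infty}f(t)=\infty$, so $f$ is a valid choice for the algorithm of \Cref{subsec:Pareto-optimal-algorithm}.

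It remains to verify the key inequality. Fix any $i$ and let $t_i := m^\star(L_i)+1$. Then $i$ itself lies in the set defining $M(t_i)$, so $M(t_i)\ge i$, and hence $f(t_i)\ge M(t_i)\ge i$. This means the smallest $j$ with $f(j)\ge i$ satisfies $g(i)\le t_i = m^\star(L_i)+1$. Plugging this into \Cref{theorem:non-uniform-generation-ub} yields
\[
t^\star(L_i) \;=\; \max\bigl(g(i),\, m^\star(L_i)+1\bigr) \;=\; m^\star(L_i)+1
\]
for every $i\ge 1$. By \Cref{claim:non-uniform-generation-lb}, any algorithm $\mcG$ that beats our generation time on some $L_i$ must be strictly worse than $m^\star(L_j)+1 = t^\star(L_j)$ on some other $L_j$, establishing exact Pareto-optimality.

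I do not expect any real obstacle beyond correctly identifying the function $f$; the proof is essentially a matter of observing that the hypothesis is precisely the finiteness condition that allows the ``inverse'' of the sequence $m^\star(L_i)+1$ to be pointwise dominated by some non-decreasing $f$ tending to infinity. One could replace the running max with any other non-decreasing majorant of $M$ (for instance, $f(t)=\max(t,M(t))$ already works if $M$ is itself non-decreasing, but in general $M$ need not be monotone, so taking the running max is the cleanest fix).
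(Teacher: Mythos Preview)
Your proof is correct and follows essentially the same approach as the paper: choose $f$ so that $f(m^\star(L_i)+1)\ge i$, then invoke \Cref{theorem:non-uniform-generation-ub} and \Cref{claim:non-uniform-generation-lb}. The only difference is that the paper takes $f=M$ directly, since $M$ is in fact already non-decreasing (as $t$ grows, the set $\{i:m^\star(L_i)+1\le t\}$ can only grow, so its maximum is monotone) and tends to infinity (because every $m^\star(L_i)$ is finite); your running max and the $\max(t,\cdot)$ wrapper are harmless but unnecessary.
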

\begin{proof}
    Consider $f:\N \to \N$ defined as $f(t)=\max\{j:m^\star(L_j)+1 \le t\}$. Observe that $f$ is a non-decreasing function, and that $\lim_{t \to \infty}f(t)=\infty$. Furthermore, observe also that for any $i$, $f(m^\star(L_i)+1) \ge i$. Therefore, if $g(i)$ is the smallest number $j$ for which $f(j) \ge i$, then $g(i) \le m^\star(L_i)+1$. So, if we choose to run the algorithm given in \Cref{subsec:Pareto-optimal-algorithm} with this function $f$, the guarantee of \Cref{theorem:non-uniform-generation-ub} implies $t^\star(L_i)=m^\star(L_i)+1$ for every $L_i$. From \Cref{claim:non-uniform-generation-lb}, we conclude that the algorithm is Pareto-optimal.
\end{proof}

A direct corollary of \Cref{remark:almost-Pareto-optimal} and \Cref{thm:Pareto-optimality-guarantee} is that exact Pareto-optimal non-uniform generation can be achieved for every \textit{finite} collection of languages.

\subsection{Impossibility of Pareto-optimality}

It is natural to wonder if the sufficient condition stated above is also necessary: this would exactly characterize the collections for which achieving Pareto-optimality is possible; moreover, our algorithm above would achieve such Pareto-optimality. Towards this, we exhibit two collections, for which we show that there \textit{cannot} exist an algorithm that achieves a Pareto-optimal sequence of generation times. One of these collections can be non-uniformly, but not uniformly generated, while the other can even be \textit{uniformly} generated. Both these collections \textit{fail} the sufficient condition stated in \Cref{thm:Pareto-optimality-guarantee}. We leave open the complete characterization of Pareto-optimality for future work.

\begin{restatable}[Non-uniformly Generatable, No Pareto-optimality]{proposition}{Paretoimpossibilitynonuniform}
    \label{claim:Pareto-impossibility-non-uniform}
    No algorithm can achieve Pareto-optimal non-uniform generation times for the collection $\mcC=\{\Z \setminus \{i\}\}_{i \in \Z}$.
\end{restatable}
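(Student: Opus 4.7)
The approach is to give a clean characterization of which sequences of non-uniform generation times are realizable for this symmetric collection, and then exhibit a strict Pareto improvement for every realizable sequence. Throughout, index languages by $i \in \Z$, and for a candidate sequence $(n_i)_{i \in \Z}$ write $B_s = \{i : n_i \le s\}$ and $b_s = |\{i : n_i > s\}| = |\Z \setminus B_s|$.

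First I would establish: some non-uniform generator $\mcG$ satisfies $t_\mcG(L_i) \le n_i$ for every $i$ iff $b_s > s$ for every $s \ge 0$. For necessity, at any time $t$ with $|S_t| = s$, the output $z_t$ must lie in $L_i \setminus S_t = (\Z \setminus \{i\}) \setminus S_t$ for every target $L_i$ consistent with $S_t$ and having $n_i \le s$; equivalently, $z_t \in (\Z \setminus B_s) \setminus S_t$. If $b_s \le s$, an adversary can enumerate all $b_s$ elements of $\Z \setminus B_s$ within its first $s$ steps and then pick any target $i \in B_s$ outside $S_t$ (such an $i$ exists because $\Z$ is infinite), emptying $(\Z \setminus B_s) \setminus S_t$ and leaving no valid output. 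For sufficiency, the algorithm that, at state $S_t$, outputs any element of $(\Z \setminus B_{|S_t|}) \setminus S_t$ always meets the generation criterion, and this set is nonempty whenever $b_{|S_t|} > |S_t|$. Next, since $b_s$ is non-increasing in $s$, if $b_{s_0}$ were finite for some $s_0$ then $b_s \le b_{s_0}$ for all $s \ge s_0$, eventually contradicting $b_s > s$; so every realizable sequence actually satisfies $b_s = \infty$ for every $s$. In particular $b_1 = \infty$, so infinitely many indices $i$ have $n_i \ge 2$.

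For the Pareto step, fix any non-uniform generator $\mcG$ and set $n_i = t_\mcG(L_i)$. Pick any $i^*$ with $n_{i^*} \ge 2$ and define $n'_{i^*} = n_{i^*} - 1$, $n'_j = n_j$ for $j \ne i^*$. The only altered value, $b'_{n_{i^*}-1} = b_{n_{i^*}-1} - 1$, is still infinite, so $(n'_i)$ is realizable too. Letting $\mcG'$ realize $(n'_i)$ yields $t_{\mcG'}(L_{i^*}) \le n_{i^*} - 1 < t_\mcG(L_{i^*})$ while $t_{\mcG'}(L_j) \le n_j = t_\mcG(L_j)$ for every $j \ne i^*$, so $\mcG'$ strictly Pareto-dominates $\mcG$ and $(t_\mcG(L_i))$ cannot be Pareto-optimal. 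The main technical subtlety is the necessity direction of the characterization, which relies on $\Z$ being infinite so that, whenever $b_s \le s$, the adversary can simultaneously cover $\Z \setminus B_s$ with $s$ strings and still retain a consistent target $i \in B_s$ outside the enumeration; granting this, the rest is a one-coordinate modification and short counting argument.
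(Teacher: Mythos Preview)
Your proof is correct and follows essentially the same strategy as the paper: characterize the realizable generation-time sequences for this collection as exactly those with $|\{i:n_i>s\}|=\infty$ for every $s$, then decrease a single coordinate to exhibit a strictly dominating realizable sequence. The only notable difference is that your necessity direction uses a direct adversary construction (enumerate $\Z\setminus B_s$ and trap the algorithm), whereas the paper argues by contradiction with the impossibility of uniform generation for this collection; also, the paper drops the chosen coordinate all the way to $1$ rather than by $1$, but both variants work.
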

\begin{proof}
    We note first that the collection in question cannot be uniformly generated because its closure dimension (see \Cref{appsec:example}) is unbounded. To see this, observe that for any finite $S \subseteq \Z$, the intersection of languages in $\mcC$ that contain $S$ is precisely $S$. However, this collection is non-uniformly generatable, since it is countable.

    Let $L_{e_1}, L_{e_2}, L_{e_3},\ldots$ be an enumeration of all the languages in $\mcC$ (with no repetition), where $L_{e_i} = \Z \setminus \{e_i\}$ for $e_i \in \Z$. Let $t_1,t_2,t_3,\ldots$ be a sequence of candidate generation times for these languages, where every $t_i$ is finite and at least 1. We call the sequence \textit{admissible} if it satisfies the following property: for every $t \ge 1$, the size of the set $\{i:t_i \ge t\}$ is infinite. We claim that there exists an algorithm that achieves the generation times given by the sequence $t_1,t_2,\ldots$ if and only if it is admissible. 
    
    For one direction, suppose that the sequence is not admissible, meaning that there exists some $t$ for which $|\{i:t_i \ge t\}| < \infty$. Suppose that an algorithm achieves the generation times given in the sequence. Then, observe that the said algorithm \textit{uniformly} generates from the collection, with a uniform time bound of $\max\{t_i: t_i \ge t\}$. Since we argued above that the collection cannot be uniformly generated, such an algorithm cannot exist.

    For the other direction, suppose that the sequence is admissible. Consider the algorithm that operates as follows: at time step $t$, it obtains the subcollection $\mcC_t = \{L_{e_i}: t_i \le t\}$. If $\mcC_t$ is non-empty, and $\cap_{L \in \mcC_t}L$ is also non-empty, the algorithm generates a new string from $\cap_{L \in \mcC_t}L$ (or repeats a string if there are no new strings available in this set); otherwise, it generates an arbitrary string. Note that this algorithm is oblivious to its input.

    We claim that for every language $L_{e_i} \in \mcC$, this algorithm achieves the desired generation time of $t_i$. To see this, it suffices to argue that for every $t$ that satisfies $|S_t| \ge t_i$, $\mcC_t$ contains $L_{e_i}$, and that $\cap_{L \in \mcC_t}L$  is infinite. Note first that any $t$ satisfying $|S_t| \ge t_i$ immediately satisfies $t \ge t_i$ (since in $t$ times steps, the input can have at most $t$ distinct strings). By definition of the algorithm then, $\mcC_t$ contains $L_{e_i}$; in particular, it is non-empty.

    Now, since the sequence is admissible, the set $\{j:t_j \ge t+1\}$ is infinite. In particular, there exist infinitely many languages $L_{e_j}$, such that each $L_{e_j}$ is excluded from $\mcC_t$. But observe that for any $L_{e_j}$ that is excluded from $\mcC_t$, \textit{every} language in $\mcC_t$ contains $e_j$. Since there are infinitely many such $L_{e_j}$, we conclude that $\cap_{L \in \mcC_t}L$ is infinite.

    To finish the proof, assume for the sake of contradiction that there exists an algorithm that achieves non-uniform generation times $t_1,t_2,\ldots$ for $L_{e_1}, L_{e_2},\ldots$ which are Pareto-optimal. By our reasoning above, the sequence $t_1,t_2,\ldots$ must be admissible. But now, consider any $t_i > 1$ in the sequence (such a $t_i$ must exist by admissibility), and consider updating $t_i=1$ (other generation times in the sequence are left unaffected). This updated sequence strictly Pareto-dominates the previous sequence. Furthermore, the updated sequence remains admissible, since we only changed one number. This means that there exists an algorithm that achieves the generation times in the updated sequence. Consequently, the algorithm we started with could not have been Pareto-optimal.
\end{proof}

\begin{restatable}[Uniformly Generatable, No Pareto-optimality]{proposition}{Paretoimpossibilityuniform}
    \label{claim:Pareto-impossibility-uniform}
    No algorithm can achieve Pareto-optimal non-uniform generation times for the collection $\mcC$ comprising of languages $\{1,2,\ldots,100\} \cup \Z_{\ge i}$ for all $i \in \Z$, where $\Z_{\ge i}=\{i,i+1,i+2,\ldots\}$.
\end{restatable}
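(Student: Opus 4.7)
The plan is to mirror the argument of \Cref{claim:Pareto-impossibility-non-uniform}: characterize which sequences of generation times are achievable by some non-uniform generating algorithm (call these \textit{admissible}), then show every admissible sequence is strictly Pareto-dominated by another admissible sequence. First, I would reindex the distinct languages of $\mcC$ as $A_k := \Z_{\ge 1-k}$ for $k \ge 0$ (the tails extending left of $\Z_{\ge 1}$) and $B_k := \{1,\ldots,100\} \cup \Z_{\ge 101+k}$ for $k \ge 1$ (the ``gap'' languages); the indices $1 \le i \le 101$ in the original parametrization all coincide with $A_0 = \Z_{\ge 1}$. A direct computation shows that for index sets $J \subseteq \Z_{\ge 0}$ and $J' \subseteq \Z_{\ge 1}$,
\[
    \Bigl(\bigcap_{k \in J} A_k\Bigr) \cap \Bigl(\bigcap_{k \in J'} B_k\Bigr) = \{1,\ldots,100\} \cup \Z_{\ge M}, \qquad M := \max\bigl(1 - \min J,\; 101 + \sup J'\bigr),
\]
which is infinite unless $J'$ is unbounded above, in which case the intersection collapses to exactly $\{1,\ldots,100\}$.

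The key claim to prove next is: a candidate sequence $\{a_k\}_{k \ge 0}, \{b_k\}_{k \ge 1}$ of generation times is admissible iff $|\{k : b_k \le 100\}| < \infty$. For the ``only if'' direction, I would let the adversary enumerate exactly $S = \{1,\ldots,100\}$ during the first $100$ steps, a legal partial enumeration of every language in $\mcC$; at step $100$ the algorithm must output $z \in L \setminus S$ for every $L$ with $t_L \le 100$, but the intersection above equals $\{1,\ldots,100\}$ whenever infinitely many $b_k \le 100$, so no valid $z$ exists. For the ``if'' direction, I would exhibit the algorithm that at time $t$ outputs an arbitrary new element of $\bigcap_{L : t_L \le |S_t|,\, S_t \subseteq L} L \setminus S_t$; a short case split on $|S_t|$ and on whether $S_t \subseteq \{1,\ldots,100\}$ shows this set is always non-empty, with the only borderline case $|S_t|=100$ and $S_t = \{1,\ldots,100\}$ ruled out exactly by the finiteness assumption.

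The Pareto-domination step is then immediate. Given any admissible $\{a_k, b_k\}$, the finiteness of $\{k : b_k \le 100\}$ forces some $k_0$ with $b_{k_0} > 100$; setting $b_{k_0}' := 1$ and leaving everything else unchanged yields a new sequence that strictly Pareto-dominates the original, and since $\{k : b_k' \le 100\}$ gains only one element, it remains finite, so the new sequence is still admissible. Consequently no admissible sequence is Pareto-optimal, so no algorithm can achieve a Pareto-optimal sequence of non-uniform generation times.

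The main new subtlety compared to \Cref{claim:Pareto-impossibility-non-uniform} is that this collection \textit{is} uniformly generatable (with uniform bound $101$, matching its closure dimension of $100$), so admissibility here is genuinely weaker than the earlier condition of ``infinitely many large $t_i$'s'' and must be pinned to the specific threshold $100$. Identifying the correct finiteness condition on $\{k : b_k \le 100\}$, and verifying it is tight via the adversarial snapshot $S = \{1,\ldots,100\}$ at step $100$, is the crux of the argument; once the characterization is in hand, the Pareto-domination is a one-line modification of the original sequence.
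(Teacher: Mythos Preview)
Your proposal is correct and follows essentially the same strategy as the paper's proof: characterize exactly the achievable (``admissible'') sequences of generation times, then show any admissible sequence is strictly Pareto-dominated by another admissible one. The only differences are cosmetic: you collapse the paper's per-threshold condition (``for every $t\in\{1,\ldots,100\}$, the indices $e_i$ with $t_i\le t$ are bounded above'') into the single equivalent condition $|\{k:b_k\le 100\}|<\infty$, and your witnessing algorithm intersects over $\{L:t_L\le|S_t|,\,S_t\subseteq L\}$ uniformly rather than switching between the two constraints at $|S_t|=100$ as the paper does; both formulations yield the same conclusion.
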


We defer the proof of this result to \Cref{sec:Pareto-optimalilty-impossibility-uniform}. We conclude this section by arguing that both the collections from \Cref{claim:Pareto-impossibility-non-uniform} and \Cref{claim:Pareto-impossibility-uniform} do not satisfy the sufficient condition of Pareto-optimality in \Cref{thm:Pareto-optimality-guarantee}. To see this, for either of these collections $\mcC$, note that the intersection of languages in any finite subcollection is infinite; since the $m^\star(\cdot)$ values in the Pareto-optimal sequence computed in \Cref{subsec:Pareto-optimal-sequence} correspond to finite intersections of finite subcollections, we have that $m^\star(L_j)=0$ for every $L_j \in \mcC$ (in fact, this holds true irrespective of the original ordering of $\mcC$ that we start with to compute the $m^\star(\cdot)$ values). Thus, the sufficient condition does not hold for $t=1$, since $|\{j:m^\star(L_j)+1 \le 1\}|=\infty$.

We now proceed to study Pareto-optimality for two recent stylized models of generation: noisy and representative generation.
\section{Pareto-optimal Noisy Non-uniform Generation}
\label{sec:Pareto-optimal-noisy}

In noisy generation \citep{raman2025generation}, we consider \textit{noisy enumerations} of the target language. For a language $L$, an enumeration of $L$ at (finite) noise level $n \ge 0$ is a sequence $x_1,x_2,x_3,\ldots$, which satisfies: 
(1) For every $x \in L$, there exists $t < \infty$ for which $x_t=x$, and 
(2) $\sum_{t=1}^\infty \Ind[x_t \notin L] \le n$.

Note that the noiseless setting considered in the previous section is only concerned with enumerations at noise level $n=0$. For a set $T$, language $L$, and integer $a \ge 0$, we will say that \textbf{$L$ $a$-contains $T$ (or $T$ is $a$-contained in $L$)} if $\sum_{x \in T} \Ind[x \notin L] \le a$.

\begin{definition}[Noisy Non-uniform Generation \citep*{raman2025generation}]
    \label{def:noisy-non-uniform-generation}
    An algorithm $\mcG$ noisily non-uniformly generates in the limit from a collection $\mcC$, if for every $L \in \mcC$ and every finite noise level $n \ge 0$, there exists $t_{n}(L)$ such that for every enumeration of $L$ at noise level $n$ presented to $\mcG$, the string $z_t$ generated by $\mcG$ at time step $t$ belongs to $L \setminus S_t$ for all $t$ satisfying $|S_t| \ge t_n(L)$.
\end{definition}

\cite{raman2025generation} derive comprehensive results characterizing noisy (non-uniform as well as uniform) generation; however, examples like \Cref{example:lrt-cp-suboptimal} render their algorithm to be Pareto-suboptimal as well; we will now work towards a Pareto-optimal noisy non-uniform generation algorithm.

\setcounter{figure}{0} 
\renewcommand{\figurename}{Figure}
\begin{figure}[t]
    \centering
    \includegraphics[scale=0.32]{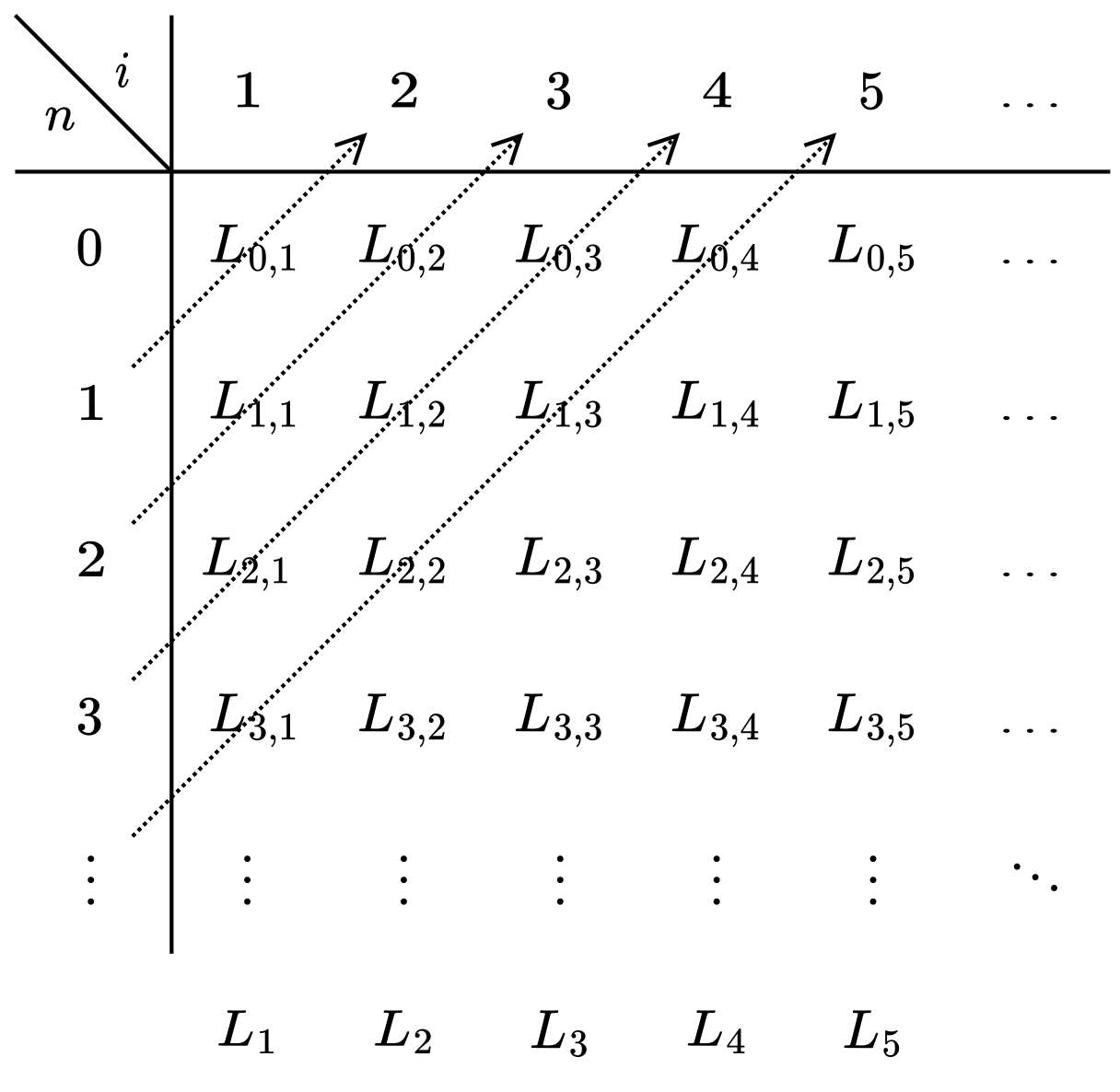}
        \caption{Diagonal traversal over languages arranged in a grid. The $i^{th}$ column entirely consists of $L_i$. When we arrive at a copy of $L_i$ in the $n^{\text{th}}$ row (i.e., at $L_{n,i}$), we compute $m^\star_n(L_i)$.}
        \label{fig:2d-grid}
\end{figure}

Consider arranging the languages in $\mcC$ in a two-dimensional grid, rows indexed by the noise level $n \ge 0$, and columns by the index $i \ge 1$ of the language in $\mcC$. For each row $n$, the element in the $i^\text{th}$ column, denoted $L_{n, i}$, is simply a copy of $L_i$. 
Rows correspond to %
the different noise levels %
for each language. We perform a diagonal traversal (\Cref{fig:2d-grid}) of the two-dimensional grid: let $\diag(\mcC_{n,i})$ denote the sequence of languages %
in a diagonal traversal starting at $L_{0,1}$ and reaching $L_{n,i}$. That is, $\diag(\mcC_{n,i})=(L_{0,1}, L_{1,1}, L_{0,2}, L_{2,1}, L_{1,2}, L_{0,3}, L_{3,1},L_{2,2},L_{1,3},L_{0,4},\ldots, L_{n, i})$.

For every $i \ge 1$ and $n \ge 0$, we will compute a noisy non-uniform complexity $m_n^\star(L_i)$ in Procedure \ref{proc:noisy}. Our approach will be similar to the noise-free setting from above, where we will perform insertion sort on our diagonal traversal. That is, as we traverse, we will keep track of an ordering of $\diag(\mcC_{n,i})$, which we will denote as $\mcC'_{l}$. In each iteration of the traversal, we will determine $m^\star_n(L_i)$, and also a subcollection $\mcC(L_i)$ and set $T(L_i)$ which together act as ``witnesses'' for $m^\star_n(L_i)$. Our algorithm for noisy non-uniform generation is also based on this diagonal traversal.

\setcounter{figure}{1} 
\renewcommand{\figurename}{Procedure}
\begin{figure}[t]
    \begin{framed}
    \centering \textbf{Procedure for Computing Noisy Non-uniform Complexities $m^\star_n(L_i)$}
    \begin{enumerate}[leftmargin=0.2cm]
        \item Initialize $\mcC'_{0}=()$, $l=1$.
        \item For $n'=0,1,2,\ldots$
        \begin{enumerate}[leftmargin=0.5cm]
            \item For $h=0,\ldots,n'$
                \begin{enumerate}[leftmargin=0.5cm]
                    \item Set noise level $n=n'-h$, language index $i=h+1$.
                    \item Initialize $\mcC'_l$ by appending $L_{n, i}$ at the end of $\mcC'_{l-1}$, and denote the resulting sequence $\mcC'_l$ as $(L'_{1},\ldots,L'_{l})$. This is simply a permutation of $\diag(\mcC_{n,i})$. %
                    We want to run one iteration of ``insertion sort" so that $L_{n, i}$ is at its correct position in $\mcC'_{l}$. Initialize $j=l$.
                    \item While $\true$ do:
                        \begin{enumerate}[leftmargin=0.5cm]
                            \item  Let $T$ be the largest \textit{finite} set such that there exists a subcollection $\mcC_{\chk} \subseteq (L'_1,\ldots,L'_j)$ satisfying the following conditions:
                            \begin{itemize}[leftmargin=0.5cm]
                                \item $\mcC_{\chk} \ni L'_j$.
                                \item For every $L' \in \mcC_{\chk}$, which corresponds to some $L_{a,b}$, %
                                $T$ is $a$-contained in $L_{b}$.
                                \item The intersection of languages in $\mcC_{\chk}$ is finite.
                            \end{itemize}
                            Let $m_{\chk}=|T|$.
                            \item Suppose $L'_{j-1}$ corresponds to $L_{a,b}$. If $j \le 1$ or $m_{\chk} > m^\star_a(L_b)$, break.
                            \item Swap $L'_{j}$ and $L'_{j-1}$ in $\mcC'_l$.
                            \item %
                            $j \gets j - 1.$
                        \end{enumerate}
                    \item Set $T(L_{n,i})=T$, $\mcC(L_{n,i})=\mcC_{\chk}, m^\star_{n}(L_{i})=m_{\chk}$.
                    \item %
                    $l \gets l+1$.
                \end{enumerate}
        \end{enumerate}
    \end{enumerate}
    \end{framed}
    \caption{Insertion Sort for Noisy Non-uniform Generation}
    \label{proc:noisy}
\end{figure}

\paragraph{Algorithm.} Fix a non-decreasing function $f:\N \to \N$ which satisfies $\lim_{t \to \infty}f(t)=\infty$. At time step $t$, the algorithm follows the procedure of diagonal traversal described through $l=f(t)$ for the purpose of constructing $\mcC'_{f(t)}=(L'_1,\ldots,L'_{f(t)})$. It then initializes $I_t=()$, and traverses $L'_1,L'_2,\ldots,L'_{f(t)}$ in this order. Whenever it encounters a language $L'_j$---which corresponds to some $L_{a,b}$---, it checks if $S_t$ is $a$-contained in $L_b$, and that $|\cap_{L \in I_t \cup \{L'_j\}}L|=\infty$. If so, it appends $L'_j$ to $I_t$. Otherwise, it moves on, leaving $I_t$ unaffected. Observe that at the end of the traversal, $I_t$ is maintained to be a collection of languages that has infinite intersection (if it is non-empty). So, at the end of its traversal, if $I_t$ is empty, the algorithm generates an arbitrary new string from the universe $U$, and if $I_t$ is non-empty, the algorithm generates a new string from $\cap_{L \in I_t}L$. 

Using arguments similar to \Cref{sec:Pareto-optimal-non-uniform}, we show that the above algorithm is almost Pareto-optimal.

\begin{restatable}[Noisy Generation times]{theorem}{generationtimesnoisy}
    \label{theorem:noisy-non-uniform-gen-ub}
    The algorithm described for noisy generation noisily non-uniformly generates from the collection $\mcC=(L_1,L_2,\ldots,)$, with $t^\star_n(L_i) = \max(g(n, i), m^\star_n(L_i)+1)$, where $g(n, i)$ is the smallest number $j$ that satisfies $f(j) \ge |\diag(\mcC_{n,i})|$.
\end{restatable}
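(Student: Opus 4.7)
The plan is to mirror the proof of \Cref{theorem:non-uniform-generation-ub} from the noiseless setting, adapted to the diagonal traversal. First I would establish the noisy analog of \Cref{claim:argmax-maintained}: for every $n \ge 0$ and $i \ge 1$, once $L_{n,i}$ has been incorporated into the ordering by Procedure \ref{proc:noisy}, and for every subsequent traversal index $l$, if $L_{n,i}$ sits at position $k$ in the current ordering $\mcC'_l$, then the pair $(T(L_{n,i}), \mcC(L_{n,i}))$ still realizes the maximum of $|T|$ over all $(T, \mcC_{\chk})$ satisfying the three bullet conditions of Procedure \ref{proc:noisy}, restricted to $\mcC_{\chk} \subseteq (L'_1, \ldots, L'_k)$ with $\mcC_{\chk} \ni L'_k$. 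The induction is on $l$; the only non-trivial step is when a freshly inserted language $L_{n',i'}$ passes the position of $L_{n,i}$, which by the break condition of the while loop requires $m_{\chk}$ computed for $L_{n',i'}$ to be at most $m^\star_n(L_i)$. If enlarging the feasible set by $L_{n',i'}$ strictly increased the max, the new optimum would have to contain $L_{n',i'}$, producing a witness pair that forces $m_{\chk}$ for $L_{n',i'}$ to exceed $m^\star_n(L_i)$, a contradiction.

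Given this invariant, the generation-time bound is proved as follows. Suppose the adversary enumerates $L_i$ via an enumeration at noise level $n$, and fix any time $t$ with $|S_t| \ge t^\star_n(L_i) = \max(g(n,i), m^\star_n(L_i) + 1)$. Then $t \ge g(n,i)$, and by monotonicity of $f$ we have $f(t) \ge |\diag(\mcC_{n,i})|$, so $L_{n,i}$ lies in $\mcC'_{f(t)}$; let $k$ be its position. When the algorithm processes $L'_k = L_{n,i}$ during its traversal, the first check---that $S_t$ is $n$-contained in $L_i$---holds by definition of a noise-$n$ enumeration of $L_i$.

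For the second check I will argue by contradiction that $|\cap_{L \in I_t \cup \{L'_k\}} L| = \infty$. Assume this intersection is finite. Then $\mcC_{\chk} := I_t \cup \{L'_k\}$ is a subcollection of $(L'_1, \ldots, L'_k)$ containing $L'_k$ with finite intersection, and by construction of $I_t$, for every $L' \in \mcC_{\chk}$ corresponding to some $L_{a,b}$, the finite set $S_t$ is $a$-contained in $L_b$. Hence the pair $(T, \mcC_{\chk})$ with $T := S_t$ is a valid witness for the three bullet conditions at $L'_k$, with $|T| = |S_t| \ge m^\star_n(L_i) + 1$. This contradicts the noisy argmax invariant from the previous paragraph, which pins the maximum at $m^\star_n(L_i)$. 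Therefore $L'_k$ is appended to $I_t$, and the algorithm outputs a fresh string from $\cap_{L \in I_t} L$, a set that is infinite and contained in $L_i$.

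The main obstacle will be cleanly bookkeeping the noisy argmax invariant: unlike the noiseless case, the witness is now a pair $(T(\cdot), \mcC(\cdot))$, and one must verify that the comparator in the while loop---comparing $m_{\chk}$ against $m^\star_a(L_b)$ for the coordinates $(a,b)$ of the current $L'_{j-1}$---is consistent with preserving the non-decreasing property of $m^\star_{a_j}(L_{b_j})$ along the ordering $\mcC'_l$, which is precisely what drives the contradiction step in the induction. Once that invariant is nailed down, every other step of the proof is a direct translation of the noiseless argument, with $a$-containment replacing set inclusion throughout.
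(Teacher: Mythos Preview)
Your proposal is correct and follows essentially the same approach as the paper: you first establish the noisy analog of \Cref{claim:argmax-maintained} (which the paper states separately as \Cref{claim:argmax-maintained-noisy}) via the same induction on $l$ with the same contradiction when a new language passes through, and then derive the generation-time bound by the same contradiction argument using $(S_t, I_t \cup \{L'_k\})$ as a witness violating the invariant. The bookkeeping concern you flag in the last paragraph is real but minor---the insertion-sort invariant that the $m^\star$ values along $\mcC'_l$ are non-decreasing carries over verbatim from the noiseless case, since the comparator in Step B is structurally identical.
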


Notice again that by choosing $f$ to be arbitrarily fast-growing, we can have the $\max$ equate to $m^\star_n(L_i)+1$ for an arbitrarily large prefix. We defer the details, and also the specification of the sufficient condition which guarantees exact Pareto-optimality for the algorithm, to \Cref{appsec:Pareto-optimal-noisy}.

\section{Pareto-optimal Representative Non-uniform Generation}
\label{sec:Pareto-optimal-representative}

In representative generation \citep{peale2025representative}, we are given a finite partition\footnote{See Lemma 2 in \cite{peale2025representative} for pathologies arising with infinitely many groups.} $\mcA=\{A_g\}_{g \in [K]}$ of the universe $U$ into $K$ groups. As before, an adversary chooses and enumerates the target language $L$ in a worst-case fashion. But now, the generating algorithm $\mcG$ outputs a \textit{distribution} $\mcG_{t}$ over strings at every time step. %
Letting $S_t$ denote the set of distinct strings in $x_1,\ldots,x_t$, we want the distribution over \textit{groups} that $\mcG_{t}$ induces to match the \textit{empirical} distribution over groups seen so far. That is, let $\emp_{t}$ denote the empirical distribution over $U$ induced by $S_t$, i.e., 
\begin{align}
    \label{eqn:emp-dist-strings-def}
    \emp_{t}(x) = \frac{\Ind[x \in S_t]}{|S_t|} \quad \text{for $x \in U$}.
\end{align}
We slightly abuse notation, and use $\emp_t$ and $\emp_{S_t}$ to refer to the same object---this will be convenient when referring to the empirical distribution induced more generally by a given set. For a distribution $D$ over $U$, let $D^{\mcA}$ be the distribution it induces over groups. That is, for $g \in [K]$,
\begin{align}
    D^\mcA(g) = \Pr_{x \sim D}[x \in A_g].
\end{align}

\begin{definition}[$\alpha$-representation]
    \label{def:alpha-representation}
    For any accuracy parameter $\alpha \in [0,1]$, we say that the generator $\mcG$ is $\alpha$-representative at time step $t$ with respect to the input $S_t$ if $\|\emp_{t}^{\mcA}-\mcG_{t}^{\mcA}\|_\infty \le \alpha$.
\end{definition}

\begin{definition}[Representative Non-uniform Generation \citep*{peale2025representative})]
    \label{def:representative-non-uniform-generation}
    Fix an accuracy parameter $\alpha \in [0,1]$. A generating algorithm $\mcG$ generates non-uniformly in the limit with $\alpha$-representation from a collection $\mcC = (L_1,L_2,\ldots)$ if for every $L \in \mcC$, there exists a $t_{\alpha}(L)$ such that for any enumeration of $L$ presented to the generator, the output distribution $\mcG_{t}$ satisfies that $\Pr_{x \sim \mcG_{t}}[x \in L \setminus S_t]=1$ for every $t$ satisfying $|S_t| \ge t_\alpha(L)$, and furthermore, the output group distribution $\mcG_{t}^\mcA$ induced by the generator is $\alpha$-representative with respect to $S_t$ at every $t \ge 1$.
\end{definition}

For representative generation as well, \cite{peale2025representative} obtain numerous characterizations; our focus will be on Pareto-optimality. We require the following definition:

\begin{definition}[Scarce Groups]
    \label{def:scarce-groups}
    Let $\mcC$ be a collection of languages, $\mcA=\{A_g\}_{g \in [K]}$ be a finite partition of the universe $U$ into $K$ groups, and $T$ be a finite subset of $U$. The scarce groups in $\mcA$ with respect to $\mcC$ and $T$ correspond to the set $B := \{g \in [K]: A_g \cap \left(\cap_{L \in \mcC}L \setminus T\right)=\emptyset\}$.
    We say that $T$ suffers group scarcity (with respect to $\mcC$ and $\mcA$), if either there is a scarce group $g \in B$ that satisfies $\emp_{T}^{\mcA}(g) > \alpha$, or it is the case that $\sum_{g \in B}\emp_{T}^{\mcA}(g) > \alpha \cdot (K - |B|)$.
\end{definition}

At a high level, the set of scarce groups are precisely those groups, for which we are out of ``safe'' new strings to generate, assuming the target language is one of the languages in $\mcC$ that contains $T$. 
We will compute the optimal $\alpha$-representative non-uniform complexity $m^\star_\alpha(L_i)$ for every $L_i$ in Procedure \ref{proc:representative} using a similar insertion sort procedure as before, with an appropriately chosen comparison criterion. The procedure motivates the following algorithm, which is qualitatively quite different from the previous two algorithms, requiring a more careful distribution of the generator's probability mass amongst strings from different groups.

\begin{figure}[t]
    \begin{framed}
    \centering \textbf{Procedure for Computing Representative Non-uniform Complexities $m^\star_\alpha(L_i)$}
    \begin{enumerate}[leftmargin=0.2cm]
        \item Initialize $\mcC'_0=\emptyset$.
        \item For $i=1,2,3,\ldots$
        \begin{enumerate}[leftmargin=0.5cm]
            \item Initialize $\mcC'_i$ by appending $L_i$ at the end of $\mcC'_{i-1}$, and denote the resulting sequence $\mcC'_i$ as $(L'_1,\ldots,L'_i)$. %
            Initialize $j=i$.
            \item While $\true$ do:
            \begin{enumerate}[leftmargin=0.5cm]
                \item  Let $T$ be the largest \textit{finite} set such that there exists a subcollection $\mcC_{\chk} \subseteq (L'_1,\ldots,L'_j)$ %
                satisfying the following conditions:
                    \begin{enumerate}[leftmargin=0.5cm]
                        \item $\mcC_{\chk} \ni L'_j$.
                        \item Every $L \in \mcC_{\chk}$ satisfies that $L \supseteq T$.
                        \item $T$ suffers group scarcity with respect to $\mcC_{\chk}$ and $\mcA$.
                    \end{enumerate}
                Let $m_{\chk} = |T|$.
                \item If $j \le 1$ or $m_{\chk} > m^\star_\alpha(L'_{j-1})$, break.
                \item Swap $L'_{j}$ and $L'_{j-1}$ in $\mcC'_i$.
                \item %
                $j \gets j - 1$.
            \end{enumerate}
            \item Set $T(L_i)=T, \mcC(L_i)=\mcC_{\chk}$, $m^\star_{\alpha}(L_i)=m_{\chk}$.     
        \end{enumerate}
    \end{enumerate}
    \end{framed}
    \caption{Insertion Sort for Representative Non-uniform Generation}
    \label{proc:representative}
\end{figure}

\paragraph{Algorithm.} %
Fix a non-decreasing function $f:\N \to \N$ that satisfies $\lim_{t \to \infty}f(t)=\infty$. At time step $t$, the algorithm follows Procedure \ref{proc:representative} up until $i=f(t)$ for the purpose of constructing the ordered collection $\mcC'_{f(t)}=(L'_1,\ldots,L'_{f(t)})$. It then initializes $I_t=()$, and traverses $L'_1,L'_2,\ldots,L'_{f(t)}$ in this order. Whenever it comes across a language $L'_j$ %
that satisfies $L'_j \supseteq S_t$, it checks if $S_t$ \textit{does not} suffer group scarcity with respect to $I_t \cup \{L'_j\}$ and $\mcA$. Namely, for the set $B$ of scarce groups in $\mcA$ with respect to $I_t \cup \{L'_j\}$ and $S_t$,
it checks that \textit{every} scarce group $g \in B$ satisfies $\emp_{t}^{\mcA}(g) \le \alpha$, \textit{and} that $\sum_{g \in B}\emp_{t}^{\mcA}(g) \le \alpha \cdot (K-|B|)$. 
If $L'_j$ passes this check, it gets appended to $I_t$.

At the end of its traversal, if the algorithm finds that $I_t$ is empty, then it sets $\mcG_{t}=\emp_t$. On the other hand, if $I_t$ is non-empty, then consider the scarce groups $B$ with respect to $I_t$ and $S_t$.
Observe first that $B \subsetneq [K]$, since when the last language was inserted into $I_t$, it was found that $\sum_{g \in B}\emp_t^\mcA(g) \le \alpha \cdot (K - |B|)$; if $B=[K]$, the RHS would be 0, but the LHS is positive. 

Now, for every non-scarce group $g \in [K] \setminus B$, there exists some string $s_g \in A_g \cap \left(\cap_{L \in I_t}L \setminus S_t\right)$; let us arbitrarily pick such an $s_g$. We will compute the mass that $\emp_{t}$ assigns to the group $g$, and assign all of this mass to $s_g$. That is, $\mcG_t(s_g)=\emp_t^{\mcA}(g)$, and $\mcG_t(s')=0$ for every $s' \in A_g, s' \neq s_g$. After we do this for all of the non-scarce groups, $\mcG_t$ will have assigned a total probability mass of $\sum_{g \in [K] \setminus B}\emp_{t}^{\mcA}(g)$, and it still needs to assign a mass worth $\sum_{g \in B}\emp_{t}^{\mcA}(g)$. $\mcG_t$ will simply distribute this remaining mass evenly over all the strings $s_g$ that were chosen from the non-scarce groups $g \in [K] \setminus B$. Summarily, $\mcG_t$ has non-zero mass on at most $K-|B|$ strings; each of these strings belong to \textit{all} the languages in $I_t$, none of these strings belong to $S_t$, and none of these strings belong to a scarce group. This completes the specification of $\mcG_t$.

For the algorithm specified above, we can show the following almost Pareto-optimal guarantee with the freedom to choose $f$ as suitable; see \Cref{appsec:Pareto-optimal-representative} for details, and also for a sufficient condition that guarantees exact Pareto-optimality. The proof for $\alpha$-representation crucially uses the manner in which we spread out the generator's mass in the algorithm, together with the fact that we maintain the set of input strings $S_t$ to \textit{not} suffer group scarcity with respect to the collection $I_t$ and $\mcA$. 

\begin{restatable}[Representative Generation times]{theorem}{generationtimerepresentative}
    \label{theorem:representative-non-uniform-gen-ub}
    The algorithm described for representative generation generates non-uniformly with $\alpha$-representation from the collection $\mcC=(L_1,L_2, \ldots)$, with $t^\star_\alpha(L_i)=\max(g(i), m^\star_\alpha(L_i)+1)$, where $g(i)$ is the smallest number $j$ for which $f(j) \ge i$.
\end{restatable}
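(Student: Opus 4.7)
The plan is to mirror the structure of the proof of \Cref{theorem:non-uniform-generation-ub} in three steps: (i) establish an analog of \Cref{claim:argmax-maintained} for Procedure \ref{proc:representative}, showing that the witness pair $(T(L_i), \mcC(L_i))$ remains the realizer of the ``largest $T$'' maximum even as later iterations of insertion sort continue to shuffle languages around; (ii) verify that $\mcG_t$ always satisfies $\alpha$-representation, regardless of $t$; and (iii) use (i) to argue that once $|S_t| \ge t^\star_\alpha(L_i)$, the target language $L_i$ is necessarily appended to $I_t$, so that $\mcG_t$ is supported on $L_i \setminus S_t$.

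For step (i), I would prove by induction on $t \ge i$ that if $k$ denotes the current index of $L_i$ in $\mcC'_t$, then $(T(L_i), \mcC(L_i))$ still realizes the largest finite $T$ over subcollections of $(L'_1,\ldots,L'_k)$ containing $L'_k$, superset-dominating $T$, and inducing group scarcity for $T$. The base case is the definition. For the inductive step, the only way the index of $L_i$ changes is if the newly inserted language $L_{t+1}$ is swapped past $L_i$; this swap only happens because the while loop did \emph{not} exit when $L_{t+1}$ sat at position $k+1$, i.e., the witness size at that moment was at most $m^\star_\alpha(L_i)$. So no subcollection in $(L'_1,\ldots,L'_{k+1})$ containing $L'_{k+1}$ can beat $(T(L_i), \mcC(L_i))$, and the argmax is preserved post-swap, exactly as in the proof of \Cref{claim:argmax-maintained}.

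For step (ii), I would analyze the two cases in the algorithm specification. If $I_t = ()$, then $\mcG_t = \emp_t$ and the representation bound is trivially $0$. If $I_t$ is non-empty, let $B$ be the set of scarce groups with respect to $I_t$ and $S_t$; since the final insertion into $I_t$ passed both checks, every $g \in B$ has $\emp_t^{\mcA}(g) \le \alpha$ and $\sum_{g \in B}\emp_t^{\mcA}(g) \le \alpha(K - |B|)$. By construction, $\mcG_t^{\mcA}(g) = 0$ for $g \in B$, so the deviation on scarce groups is at most $\alpha$; and for every non-scarce group $g \in [K]\setminus B$, $\mcG_t^{\mcA}(g) - \emp_t^{\mcA}(g)$ equals the evenly redistributed mass $(\sum_{g' \in B}\emp_t^{\mcA}(g'))/(K-|B|) \le \alpha$. (The argument that $B \subsetneq [K]$ is already given in the body of the paper and guarantees the redistribution step is well-defined.) Thus $\|\emp_t^\mcA - \mcG_t^\mcA\|_\infty \le \alpha$ at every $t$.

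For step (iii), fix the target $L_i$ and any $t$ with $|S_t| \ge t^\star_\alpha(L_i)$. Then $t \ge g(i)$, so $f(t) \ge i$ and $L_i$ appears in $\mcC'_{f(t)}$, say at position $k$. When the algorithm reaches $L'_k = L_i$, the preconditions $L'_k \supseteq S_t$ and $|S_t| \ge m^\star_\alpha(L_i)+1$ hold. Suppose for contradiction that $L_i$ fails the no-scarcity check, i.e., $S_t$ suffers group scarcity with respect to $I_t \cup \{L_i\}$ and $\mcA$. Then $(S_t, I_t \cup \{L_i\})$ is a valid candidate for the ``largest finite $T$'' maximization of step (i) evaluated at $L'_k$, and it has witness size $\ge m^\star_\alpha(L_i)+1$, contradicting the argmax maintenance established in step (i). Hence $L_i \in I_t$, and by the construction of $\mcG_t$ on non-scarce groups, every string in the support of $\mcG_t$ lies in $\cap_{L \in I_t} L \setminus S_t \subseteq L_i \setminus S_t$. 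The main obstacle is being careful in step (i) that the insertion-sort invariants carry over when the comparator is the more intricate ``largest $T$ inducing group scarcity'' rather than ``largest finite intersection'', and in step (ii) that the two-part scarcity check is exactly what is needed to yield $\ell_\infty$ representation after redistribution; once these are in place the rest of the argument is parallel to \Cref{theorem:non-uniform-generation-ub}.
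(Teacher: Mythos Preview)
Your proposal is correct and follows essentially the same approach as the paper: the paper proves the argmax-maintained claim for Procedure~\ref{proc:representative} exactly as you describe in step (i), and then establishes validity of generation via the contradiction argument of your step (iii) and $\alpha$-representation via the case analysis of your step (ii). The only cosmetic difference is that the paper presents the validity argument before the representation argument, whereas you reverse the order.
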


\section*{Acknowledgements}
This work was supported by Moses Charikar's and Gregory Valiant's Simons Investigator Awards.

\bibliographystyle{plainnat}
\bibliography{references}

\appendix

\section{Impossibility of Pareto-optimality for a Collection That Can Be Uniformly Generated}
\label{sec:Pareto-optimalilty-impossibility-uniform}

\Paretoimpossibilityuniform*
\begin{proof}
    We first note that the collection can be uniformly generated. To see this, observe that any finite set of size larger than 100 contains a number outside $\{1,2,\ldots,100\}$---let $i$ be the smallest such number in the set. Then, the languages that contain the set are precisely the languages $\{1,2,\ldots,100\} \cup \Z_{\ge j}$ where $j \le i$; in particular, the intersection of these languages contains all the numbers that are at least $j$, and is hence infinite. Thus, the closure dimension of the collection is at most 100. In fact, it is equal to 100, since the intersection of all the languages in the collection is precisely $\{1,2,\ldots,100\}$.

    Let $L_{e_1}, L_{e_2}, L_{e_3},\ldots$ be an enumeration of all the languages in $\mcC$ (with no repetition), where $L_{e_i} = \{1,2,\ldots,100\} \cup \Z_{\ge e_i}$ for $e_i \in \Z$. Let $t_1,t_2,t_3,\ldots$ be a sequence of candidate generation times for these languages, where every $t_i$ is finite and at least 1. We call the sequence \textit{admissible} if it satisfies the following property: for every $t \in \{1,2,\ldots,100\}$, there exists $n_t < \infty$ such that the set $\mcC_t = \{L_{e_i}: t_i \le t\}$ satisfies $e_i \le n_t$ for every $L_{e_i} \in \mcC_t$. We claim that there exists an algorithm that achieves the generation times given by the sequence $t_1,t_2,\ldots$ if and only if it is admissible. 

    For one direction, suppose that the sequence is not admissible, meaning that there exists some $t \in \{1,2,\ldots,100\}$ for which the set $\mcC_t = \{L_{e_i}:t_i \le t\}$ comprises of languages $L_{e_{i_1}}, L_{e_{i_2}}, \ldots$ where $\lim_{n \to \infty}e_{i_n}=\infty$. Assume for the sake of contradiction that there exists an algorithm that correctly achieves the generation times given in the sequence. Consider feeding $1,2,\ldots,t$ as input to the algorithm in the first $t$ time steps, and let $z_t$ be the most recent string generated by the algorithm. Then, by assumption, there exists a language $L_{e_n} \in \mcC_t$ for which $e_n > z_t$. We can now complete the input by enumerating the rest of the strings in $L_{e_n}$. Note that the algorithm generated a string outside $L_{e_n}$ at time step $t$, violating its claimed generation time of $t_i \le t$ for $L_{e_n}$.

    For the other direction, suppose that the sequence is admissible. Consider the algorithm that operates as follows: at any time step $t$, if $|S_t| \in \{1,2\ldots,100\}$, it obtains the subcollection $\mcC_t = \{L_{e_i}: t_i \le |S_t|\}$; otherwise if $|S_t| > 100$, the algorithm obtains $\mcC_t = \cap_{L \in \mcC, S_t \subseteq L}L$. If $\mcC_t$ is non-empty, and $\cap_{L \in \mcC_t}L$ is also non-empty, the algorithm generates a new string from $\cap_{L \in \mcC_t}L$ (or repeats a string if there are no new strings available in this set); otherwise, it generates an arbitrary string. Note that this algorithm is oblivious to its input.

    We claim that for every language $L_{e_i} \in \mcC$, this algorithm achieves the desired generation time of $t_i$. To see this, it suffices to argue that for every $t$ that satisfies $|S_t| \ge t_i$, $\mcC_t$ contains $L_{e_i}$, and that $\cap_{L \in \mcC_t}L$  is infinite. 
    
    First, consider any $t$ for which $t_i \le |S_t| \le 100$. By definition of the algorithm then, the subcollection $\mcC_t$ constructed by the algorithm contains $L_{e_i}$; in particular, it is non-empty. Furthermore, by admissibility, we have that there exists $n_t < \infty$ such that $e_j \le n_t$ for every $L_{e_j} \in \mcC_t$. In particular, this implies that every language in $\mcC_t$ contains all the numbers that are at least $n_t$, meaning that $\cap_{L \in \mcC_t}L$ is infinite. 

    Now, consider a $t$ for which $|S_t| > 100$; in this case, the algorithm constructs the subcollection $\mcC_t = \cap_{L \in \mcC, S_t \subseteq L}L$. Since we are considering $L_{e_i}$ to be the language being enumerated as input, $S_t \subseteq L_{e_i}$. Furthermore, since we argued at the beginning that the closure dimension of $\mcC$ is 100, we immediately have that $\cap_{L \in \mcC, S_t \subseteq L}L$ is infinite.

    To finish the proof, assume for the sake of contradiction that there exists an algorithm that achieves non-uniform generation times $t_1,t_2,\ldots$ for $L_{e_1}, L_{e_2},\ldots$ which are Pareto-optimal. By our reasoning above, the sequence $t_1,t_2,\ldots$ must be admissible. But now, consider any $t_i > 1$ in the sequence; such a $t_i$ must exist, since otherwise, if $t_i=1$ for every $i$, we can simply input, say 1, at the first time step, post which the algorithm must generate $z \neq 1$, and then the algorithm would have violated its guarantee that $t_{i}=1$ for $e_i=z$, if we were to complete enumerating $L_{e_i}$ beyond the first time step. So, for such a $t_i$ that satisfies $t_i > 1$, consider updating $t_i=1$, and let the other generation times in the sequence remain as is. This updated sequence strictly Pareto-dominates the earlier sequence. Recall also that $t_i$ corresponds to the generation time for the language $L_{e_i}$; hence, we observe that the updated sequence remains admissible. Namely, each previous value of $n_t$ for $t \in \{1,2,\dots,100\}$ may simply be updated to $n_t = \max(n_t, e_i) < \infty$. Therefore, since the updated sequence is admissible, there exists an algorithm that achieves the generation times given in the updated sequence. Consequently, the algorithm we started with could not have been Pareto-optimal.
\end{proof}

We conclude by arguing that both the collections from \Cref{claim:Pareto-impossibility-non-uniform} and \Cref{claim:Pareto-impossibility-uniform} do not satisfy the sufficient condition of Pareto-optimality in \Cref{thm:Pareto-optimality-guarantee}. To see this, for either of these collections $\mcC$, note that the intersection of languages in any finite subcollection is infinite; since the $m^\star(\cdot)$ values in the Pareto-optimal sequence computed in \Cref{subsec:Pareto-optimal-sequence} correspond to finite intersections of finite subcollections, we have that $m^\star(L_j)=0$ for every $L_j \in \mcC$ (in fact, this holds true irrespective of the original ordering of $\mcC$ that we start with to compute the $m^\star(\cdot)$ values). Thus, the condition does not hold for $t=1$, since $|\{j:m^\star(L_j)+1 \le 1\}|=\infty$. 

\section{Pareto-optimal Noisy Non-uniform Generation}
\label{appsec:Pareto-optimal-noisy}

We first claim that the witness set $T$ constructed in Step A of the while loop in Procedure \ref{proc:noisy} is always bounded.

\begin{claim}[Witness set is bounded]
    \label{claim:T-bounded-noisy}
    The set $T$ in Step A of the while loop always has bounded size. In particular, $m^\star_n(L_i) < \infty$ for every $n,i$.
\end{claim}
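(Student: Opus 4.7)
The plan is to bound $|T|$ uniformly over all valid choices of $\mcC_{\chk}$ in Step A. Fix an iteration of the while loop with the current prefix $(L'_1, \ldots, L'_j)$. Since this sequence is finite, there are only finitely many subcollections $\mcC_{\chk} \subseteq (L'_1, \ldots, L'_j)$, so it suffices to exhibit a finite upper bound on $|T|$ for each individual $\mcC_{\chk}$ satisfying the three conditions in Step A, and then take the maximum.

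First, I would fix any such valid $\mcC_{\chk}$, and let $I := \bigcap_{L_{a,b} \in \mcC_{\chk}} L_b$ denote the intersection of the underlying languages; by the third condition, $|I| < \infty$. For any $T$ witnessed by this $\mcC_{\chk}$, I split $T = (T \cap I) \sqcup (T \setminus I)$. The first piece satisfies $|T \cap I| \le |I|$, which is finite. For the second piece, every element $x \in T \setminus I$ lies outside at least one $L_b$ with $L_{a,b} \in \mcC_{\chk}$, so by a union bound
\begin{align*}
    |T \setminus I| \;\le\; \sum_{L_{a,b} \in \mcC_{\chk}} |T \setminus L_b| \;\le\; \sum_{L_{a,b} \in \mcC_{\chk}} a,
\end{align*}
where the last inequality uses the second condition in Step A ($T$ is $a$-contained in $L_b$, meaning $|T \setminus L_b| \le a$). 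The sum on the right is finite because $\mcC_{\chk}$ is a finite subcollection and each noise level $a$ appearing there is finite (it is inherited from the finitely many entries of $(L'_1, \ldots, L'_j)$, each of which corresponds to some specific $L_{a,b}$ with $a < \infty$).

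Combining, for each valid $\mcC_{\chk}$, any witnessed $T$ has $|T| \le |I(\mcC_{\chk})| + \sum_{L_{a,b} \in \mcC_{\chk}} a < \infty$. Taking the maximum over the finitely many choices of $\mcC_{\chk}$ gives a finite bound on the size of the set $T$ defined in Step A, and hence on $m_{\chk} = |T|$. Since $m^\star_n(L_i)$ is set to this $m_{\chk}$ in Step D, we conclude $m^\star_n(L_i) < \infty$. I do not anticipate a serious obstacle here; the main content is just noticing that the ``noise budget'' plus $|I|$ caps $|T|$ for each fixed $\mcC_{\chk}$, and that there are only finitely many $\mcC_{\chk}$ to worry about.
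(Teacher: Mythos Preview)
Your proof is correct and follows essentially the same approach as the paper: bound $|T|$ by the size of the finite intersection plus the total noise budget, and then use that there are only finitely many subcollections of the finite prefix $(L'_1,\ldots,L'_j)$. The only cosmetic difference is that you argue directly via the decomposition $T=(T\cap I)\sqcup(T\setminus I)$ with a per-$\mcC_{\chk}$ bound, whereas the paper phrases the same computation as a contradiction against a global bound $d^\star + l\cdot a^\star$.
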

\begin{proof}
    Consider any subcollection $\mcC_{\chk}$ of $(L'_1,\ldots,L'_j)$ that contains $L'_j$, and has finite intersection. In particular, consider the size of the largest finite intersection, i.e.,
    \begin{align}
        \label{eqn:largest-T-noisy}
        d^\star = \max_{\substack{\mcC_{\chk} \subseteq (L'_1,\ldots,L'_j), \\ \mcC_{\chk} \ni L'_j, \\ |\cap_{L' \in \mcC_{\chk}}L'| < \infty }}|\cap_{L' \in \mcC_{\chk}}L'|.
    \end{align}
    Since $(L'_1,\ldots,L'_j)$ is finite, there are only finitely many $\mcC_{\chk}$ to consider, and each of these realize a finite intersection. Consequently, $d^\star$ is finite.

    Now, let $a^\star$ be the \textit{largest} noise level under consideration in the collection $\diag(\mcC_{n, i})$, i.e.,
    \begin{align*}
        a^\star = \max\{a: L_{a,b} \in \diag(\mcC_{n, i})\},
    \end{align*}
    and observe that $a^\star$ is finite (since $n',h$ are finite). Furthermore, since every $L' \in \mcC_{\chk}$ corresponds to some $L_{a,b} \in \diag(\mcC_{n, i})$, we have that for every $L_{a,b}$ corresponding to $L' \in \mcC_{\chk}$, it is the case that $a \le a^\star$. 

    So, suppose for the sake of contradiction that there exists arbitrarily large $T$, for which there exists a subcollection $\mcC_{\chk}$ satisfying the three conditions listed under Step A. In particular, this means that there exists such a $T$ of size $d > d^\star+l \cdot a^\star$ (where, recall that $l$ is the common size of $\mcC'_l$ and $\diag(\mcC_{n,i})$, and an upper bound on the size of $\mcC_{\chk}$). Since $T$ satisfies the second condition in Step A, we have that for every $L_{a,b}$ corresponding to $L' \in \mcC_{\chk}$, $T$ is $a$-contained in $L_b$, i.e., $L_{a,b}$ contains at least $|T|-a$ strings in $T$. But this means that there are at least $|T|-|\mcC_{\chk}| \cdot \max\{a\} \ge d -l \cdot a^\star > d^\star$ strings that are contained in \textit{every} language in $\mcC_{\chk}$. Additionally, $\mcC_{\chk}$ also satisfies the first and third conditions in Step A; that is, $L'_j \in \mcC_{\chk}$, and the intersection of languages in $\mcC_{\chk}$ is finite. Summarily, we have found $\mcC_{\chk}$ to be a subcollection of $(L'_1,\ldots,L'_j)$ that contains $L'_j$, \textit{and} has a finite intersection of size strictly larger than $d^\star$. This contradicts our derivation of $d^\star$ in \eqref{eqn:largest-T-noisy}. Thus, $T$ cannot be unbounded.
\end{proof}

Next, we show that the sequence of $m^\star_n(L_i)$ values constructed by the procedure forms a sequence of Pareto-optimal generation times. The claim is similar to \Cref{claim:non-uniform-generation-lb}.

\begin{claim}[$m^\star_n(\cdot)$ forms a Pareto-optimal sequence]
    \label{claim:noisy-non-uniform-gen-lb}
    Any algorithm $\mcG$ that satisfies $t_{n, \mcG}(L_i) < m^\star_n(L_i)+1$ for some $L_i$ must satisfy that $t_{n', \mcG}(L_j) > m^\star_{n'}(L_j)+1$ for some $L_{n', j}$ which is before $L_{n,i}$ in $\diag(\mcC_{n,i})$.
\end{claim}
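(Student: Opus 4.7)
The plan is to mimic the proof of \Cref{claim:non-uniform-generation-lb}, making the adjustments needed to handle noise. Since $t_{n,\mcG}(L_i) \ge 1$ and we are given $t_{n,\mcG}(L_i) < m^\star_n(L_i)+1$, we have $m^\star_n(L_i) \ge 1$, so the witness $\mcC(L_{n,i})$ is non-empty and $T := T(L_{n,i})$ is a finite set of size exactly $m^\star_n(L_i)$. I would first record the noisy analog of \Cref{observation:S-contains-smaller-complexity}: any $L_{a,b} \in \mcC(L_{n,i})$ with $(a,b) \neq (n,i)$ lies before $L_{n,i}$ in $\diag(\mcC_{n,i})$ and satisfies $m^\star_a(L_b) < m^\star_n(L_i)$. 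Its proof is identical to the noiseless version---the while loop of \Cref{proc:noisy} exits when $m_{\chk} > m^\star_a(L_b)$ for the immediately preceding entry, and insertion sort makes the $m^\star$ values non-decreasing along $\mcC'_l$. Moreover, if $\mcC(L_{n,i})$ contained only grid entries $L_{a,i}$, the intersection of the underlying languages would be $L_i$, which is infinite---contradicting condition (iii) in Step A. Hence $\mcC(L_{n,i})$ must contain some $L_{a,b}$ with $b \neq i$.

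Next I would specify the adversary in two phases. In the first $|T|$ rounds it enumerates the elements of $T$ in an arbitrary order; because $L_{n,i} \in \mcC(L_{n,i})$, $T$ is $n$-contained in $L_i$, making this a legitimate prefix of a noisy enumeration of $L_i$ at level $n$. Let $z$ be the string output by $\mcG$ at $t=|T|=m^\star_n(L_i)$. The hypothesis $t_{n,\mcG}(L_i) \le m^\star_n(L_i) = |S_t|$ forces $z \in L_i \setminus T$; otherwise, having the adversary complete the enumeration of $L_i$ truthfully from round $|T|+1$ onward would yield a valid noisy enumeration of $L_i$ at level $n$ on which $\mcG$ violates its generation-time guarantee at time $|T|$.

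The crux is now to switch the adversary to expose a different language. The enlarged set $T'=T\cup\{z\}$ has $|T'|>|T|$, so by maximality of $T$ with respect to the fixed subcollection $\mcC(L_{n,i})$, there must exist $L_{a,b}\in\mcC(L_{n,i})$ for which $T'$ is \emph{not} $a$-contained in $L_b$. Since $T$ is $a$-contained in $L_b$, this can happen only if $|T \setminus L_b| = a$ and $z \notin L_b$. In particular $L_b \neq L_i$ (as $z \in L_i$), so $(a,b) \neq (n,i)$ and the observation from the first paragraph yields $m^\star_a(L_b) < m^\star_n(L_i)$. Switch the adversary at round $|T|+1$ to enumerate $L_b$ truthfully. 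Since the second phase contributes no additional noise and the first phase contributes exactly $|T \setminus L_b| = a$ off-language strings, the full input is a valid enumeration of $L_b$ at noise level $a$. At $t=|T|$, $|S_t| = m^\star_n(L_i) > m^\star_a(L_b)$, so if $t_{a,\mcG}(L_b)$ were at most $m^\star_a(L_b)+1 \le |S_t|$, then $\mcG$ would have to output $z \in L_b \setminus S_t$; but $z \notin L_b$, contradiction. Hence $t_{a,\mcG}(L_b) > m^\star_a(L_b)+1$, as required with $(n',j)=(a,b)$.

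The main technical subtlety I expect to require care is the maximality step above: in the noiseless proof the witness was literally the intersection, so $z$ lying outside it immediately forced $z \notin L_j$ for some $L_j \in \mcC_{\chk}$. Here $T$ is only indirectly tied to each $L_b$ through the $a$-containment constraint, and it is saturation $|T \setminus L_b| = a$ at some $L_{a,b}$---exactly the obstruction to enlarging $T$ by one string without spending more noise---that plays the role of ``being outside the intersection''. Checking that the saturated $L_{a,b}$ can always be chosen with $L_b \neq L_i$ (so that $z \in L_i$ is genuinely useful against $L_b$) is the crucial point, and it follows from $z \in L_i$ together with careful bookkeeping of the noise budgets as described.
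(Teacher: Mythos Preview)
Your proof is correct but follows a genuinely different route from the paper's. The paper has the adversary enumerate $T(L_{n,i})$ \emph{and then} the remaining (finitely many) strings in $\bigcap_{L \in \mcC(L_{n,i})} L \setminus T(L_{n,i})$; having exhausted the whole intersection, the generated string $z \in L_i \setminus S_t$ must lie outside $\bigcap_{L \in \mcC(L_{n,i})} L$, hence outside some $L_j$ with $L_{n',j} \in \mcC(L_{n,i})$. The containment bookkeeping is then that $S_t \setminus L_j = T \setminus L_j$, so $S_t$ is still $n'$-contained in $L_j$, and the switch to enumerating $L_j$ at noise level $n'$ is valid.

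You instead stop after enumerating only $T$, and invoke the \emph{maximality} of $T$: since $\mcC(L_{n,i})$ remains a feasible witness for $T' = T \cup \{z\}$ in all respects except possibly the containment constraints, some $L_{a,b} \in \mcC(L_{n,i})$ must have its noise budget already saturated at $|T \setminus L_b| = a$ with $z \notin L_b$. This is a nice observation---it directly identifies the language to switch to without the second enumeration phase, and gives exact (rather than $\le$) noise level when completing the enumeration of $L_b$. The paper's route is closer in spirit to the noiseless proof of \Cref{claim:non-uniform-generation-lb} (exhaust the intersection, then $z$ falls outside); yours trades that for a cleaner use of the extremality of $T$ in Step~A of Procedure~\ref{proc:noisy}.
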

\begin{proof}
    Note that the language $L_i$ must be such that $m^\star_n(L_i) > 0$, since $t_{n, \mcG}(\cdot)$ is always at least 1. In particular, this means that $\mcC(L_{n,i})$ is non-empty. $\mcC(L_{n,i})$ cannot also only contain copies of $L_i$, since otherwise, the intersection of languages in $\mcC(L_{n,i})$ would be infinite. So, $\mcC(L_{n,i})$ contains at least one language which is not equal to $L_i$. 
    
    Then, the adversary operates as follows: first, the adversary enumerates all the $m^\star_{n}(L_i)$ strings in $T(L_{n,i})$. Thereafter, the adversary enumerates all the remaining strings in the set $\left(\cap_{L \in \mcC(L_{n,i})}L\right)\setminus T(L_{n,i})$. Observe that at this point, the adversary has enumerated a sequence $S_t=\{x_1,\ldots,x_t\}$ where $t \ge m^\star_n(L_i)$, such that for every language $L \in \mcC(L_{n,i})$, which corresponds to some $L_{n',j}$, it is the case that $S_t$ is $n'$-contained in $L_j$. Let $z$ be the string generated by $\mcG$ at time $t$. Since $\mcG$ satisfies the guarantee that $t_{n, \mcG}(L_i) \le m^\star_n(L_i)$, it must be the case that $z \in L_i \setminus S_t$ (otherwise, the adversary can continue enumerating the rest of $L_i$ beyond this point, producing a valid enumeration of $L_i$ at noise level $n$, and causing $\mcG$ to violate its guarantee for $L_i$). But since we have exhausted out all the strings in the set $\cap_{L \in \mcC(L_{n,i})}L$, it also must be the case that $z \notin L_{n',j}$ for some $L_{n',j} \in \mcC(L_{n,i})$. The adversary can then continue enumerating the rest of $L_{n',j}$ beyond $t$, producing a valid enumeration of $L_j$ at noise level $n'$. Now, by construction of $\mcC(L_{n,i})$ and an argument similar to \Cref{observation:S-contains-smaller-complexity}, $m^\star_{n'}(L_j) < m^\star_n(L_i)$. So, if $t_{n', \mcG}(L_j)$ were to be at most $m^\star_{n'}(L_j)+1$, which is at most $m^\star_n(L_i)$, at time step $t$, the string generated by $\mcG$ should have belonged to $L_j$; we know this is not true, and hence $t_{n', \mcG}(L_j) > m^\star_{n'}(L_j)+1$.
\end{proof}

We now prove a claim similar to \Cref{claim:argmax-maintained}, which lends itself to the correctness of the stated noisy non-uniform generation algorithm.

\begin{claim}[Arg max maintained]
    \label{claim:argmax-maintained-noisy}
    For any $n \ge 0, i \ge 1$, consider any iteration $l$ of the nested for loop after the diagonal traversal has reached $L_{n,i}$ (i.e., $l \ge |\diag(\mcC_{n,i})|$), and let $k$ be the index of $L_{n,i}$ in $\mcC'_l=(L'_1,\ldots,L'_l)$ at the end of this iteration. Then,
    \begin{align}
        \label{eqn:argmax-maintained-noisy}
        T(L_{n,i}) \in \argmax_{T}\{|T|: T \text{ satisfies }(\star)\},
    \end{align}
    where $(\star)$ is the following condition: $T$ is finite, and there exists a subcollection $\mcC_{\chk} \subseteq (L'_1,\ldots,L'_k)$, such that $\mcC_{\chk} \ni L'_k$, the intersection of languages in $\mcC_{\chk}$ is finite, and for every $L' \in \mcC_{\chk}$, which corresponds to some $L_{a,b}$, $T$ is $a$-contained in $L_b$.
\end{claim}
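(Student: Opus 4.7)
The plan is to mirror the inductive argument of \Cref{claim:argmax-maintained} from the noise-free setting, while accounting for the more elaborate condition ``$T$ is $a$-contained in $L_b$'' that replaces plain set containment. I would induct on the iteration index $l \ge |\diag(\mcC_{n,i})|$ of the overall procedure. The base case $l = |\diag(\mcC_{n,i})|$ is immediate: this is exactly the iteration in which Step D of Procedure \ref{proc:noisy} sets $T(L_{n,i})$, $\mcC(L_{n,i})$, and $m^\star_n(L_i)$ to witness the maximum $|T|$ over subcollections satisfying $(\star)$, and \Cref{claim:T-bounded-noisy} guarantees this maximum is attained by a finite $T$.

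For the inductive step, I would assume the claim holds at iteration $l$ with $L_{n,i}$ at position $k$ in $\mcC'_l$. At iteration $l+1$, the procedure appends some new entry $L_{a^\star, b^\star}$ and runs one pass of insertion sort. If that pass leaves $L_{a^\star, b^\star}$ at a position strictly greater than $k$, then $L_{n,i}$ remains at position $k$ with an unchanged prefix of length $k$, and the inductive hypothesis yields the claim at once. Otherwise, $L_{n,i}$ is pushed to position $k+1$ by a swap that takes place during the while-loop iteration with $j = k+1$; at that moment, $L'_{k+1} = L_{a^\star, b^\star}$, $L'_k = L_{n,i}$, and Step B together with the fact that the loop did not break tells us that $m_{\chk} \le m^\star_n(L_i)$, where $m_{\chk}$ is the value computed in Step A over finite sets $T'$ witnessed by subcollections of $(L'_1, \ldots, L'_{k+1})$ that contain $L_{a^\star, b^\star}$.

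The key point in this second case is that $T(L_{n,i})$ continues to realize the argmax over subcollections of $(L'_1, \ldots, L'_{k+1})$ that contain $L_{n,i}$. I would split any candidate witness $\mcC_{\chk}$ into two buckets: those contained entirely in $(L'_1, \ldots, L'_k)$, and those that additionally contain the newly added $L_{a^\star, b^\star}$. The first bucket is bounded by $m^\star_n(L_i)$ via the inductive hypothesis. Any $\mcC_{\chk}$ in the second bucket (together with its witness $T'$) is a legitimate competitor in the pre-swap computation of $m_{\chk}$, hence $|T'| \le m_{\chk} \le m^\star_n(L_i)$. So $T(L_{n,i})$, which is itself a valid witness by the inductive hypothesis, remains in the argmax immediately after the swap. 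Subsequent iterations of the while loop at positions $< k$ only permute languages in front of $L_{n,i}$ without changing the unordered prefix of length $k+1$, so the argmax characterization is preserved until iteration $l+1$ completes.

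The main obstacle I anticipate is the bookkeeping for the second bucket: one has to verify that the ``$a$-containment'' condition depends only on the unordered multiset of pairs $(a, b)$ appearing in $\mcC_{\chk}$ and not on the index $j$ at which $\mcC_{\chk}$ is being considered, so that a witness valid in the pre-swap computation of $m_{\chk}$ (where $L'_j = L_{a^\star, b^\star}$) remains valid after the swap (where $L'_j = L_{n,i}$). Once this is spelled out, the combinatorial skeleton of the induction is the same as in the noise-free setting, and the rest is formal.
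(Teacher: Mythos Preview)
Your proposal is correct and follows essentially the same inductive argument as the paper: induction on the iteration index $l$, with the base case given by the definition of $T(L_{n,i})$ and the inductive step handled by distinguishing whether the newly inserted language ends up before or after $L_{n,i}$, then using the non-break condition $m_{\chk} \le m^\star_n(L_i)$ to rule out any new witness that involves the newly added language. The bookkeeping point you flag---that condition $(\star)$ depends only on the unordered multiset of pairs $(a,b)$ in $\mcC_{\chk}$---is precisely what makes the pre-swap and post-swap feasible sets coincide, and once noted it is immediate.
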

\begin{proof}
    At the end of the iteration $l$ of the nested for loop when the diagonal traversal is \textit{at} $L_{n, i}$ (i.e., $l=|\diag(\mcC_{n,i})|$), the claim is true by definition of $T(L_{n,i})$. Now suppose as our inductive hypothesis that the claim is true for some iteration $l' \ge l$. We will show that it is true for $l'+1$. Suppose $k$ was the index of $L_{n,i}$ at the end of the ${l'}^\text{th}$ iteration of the nested for loop. If $k$ continues to be the index of $L_{n,i}$ at the end of the $(l'+1)^\text{th}$ iteration as well (meaning that the language after $L_{n,i}$ in the diagonal traversal---call it $L_{\next}$---was placed at some index larger than $k$ in $\mcC'_{l'+1}$), the claim continues to be true simply by the inductive hypothesis. Otherwise, $L_\next$ was placed at an index in $\{1,\ldots, k\}$, and so $L_{n,i}$ was moved to index $k+1$ in $\mcC'_{l'+1}$. Consider the precise iteration of the while loop when $j$ was equal to $k+1$ (so that $L'_{k+1}$ was $L_\next$), and $m_{\chk}$ was found to be at most $m^\star_n(L_i)$. At this point, Step C tells us to swap $L'_k$ and $L'_{k+1}$ in $\mcC'_{n+1}$ (so that $L'_{k+1}$ becomes equal to $L_{n,i}$, and $L'_k$ becomes equal to $L_\next$), and thereafter, note that the (unordered) subcollection of languages before $L'_{k+1}$ remains unchanged until the while loop breaks, and the iteration of the for loop completes. So, it suffices to argue that, \textit{before swapping}, $T(L_{n,i})$ belongs to the $\argmax$ of $|T|$ over the set $\{T\}$, where $T$ satisfies the condition: $T$ is finite, and there exists a subcollection $\mcC_\chk \subseteq (L'_1,\ldots,L'_k, L'_{k+1})$, such that $\mcC_\chk \ni L'_k$, the intersection of languages in $\mcC_\chk$ is finite, and for every $L' \in \mcC_\chk$, which corresponds to some $L_{a,b}$, $T$ is $a$-contained in $L_b$. Recall that our inductive hypothesis already guarantees that $T(L_{n,i})$ belongs to the arg max over the set of $\{T\}$, where the condition on $T$ only considers subcollections of $(L'_1,\ldots,L'_{k})$. That is, the only additional language now being considered in the condition is $L'_{k+1}$. So, if the max increased from what it was when $L'_{k+1}$ was not being considered, it \textit{must} be the case that for every $T$ belonging to the new arg max, the subcollection $\mcC_{\chk}$ for that $T$ contains $L'_{k+1}$. But this means that such a $T$---which has size strictly larger than $T(L_{n,i})$---satisfies the condition: $T$ is finite, and there exists a subcollection $\mcC_\chk \subseteq (L'_1,\ldots,L'_k, L'_{k+1})$, such that $\mcC_\chk \ni L'_{k+1}$, the intersection of languages in $\mcC_\chk$ is finite, and for every $L' \in \mcC_\chk$, which corresponds to some $L_{a,b}$, $T$ is $a$-contained in $L_b$. This contradicts that $m_{\chk}$ is at most $m^\star_n(L_i)$ (since $|T(L_{n,i})|=m^\star_n(L_{i})$). Thus, it must be the case that $T(L_{n,i})$ continues to be in the arg max. This completes the proof of the inductive step.
\end{proof}

We can now argue the correctness of the noisy generation algorithm stated in \Cref{sec:Pareto-optimal-noisy}, showing also that it is almost Pareto-optimal.

\generationtimesnoisy*

\begin{proof}
    Suppose that the adversary chose a noise level $n \ge 0$, and presented the algorithm with an enumeration of $L_i$ at noise level $n$, and suppose that we are at a time step $t$ for which $|S_t| \ge t^\star_n(L_i, \mcC)$. Observe that since $t \ge |S_t| \ge g(n,i)$, and $f$ is non-decreasing, $L_{n,i}$ is under consideration at $t$, and belongs to the collection $\mcC'_{f(t)}=(L'_1,\ldots,L'_{f(t)})$ constructed by the algorithm. Let $k$ be the index of the language $L_{n,i}$ in this collection. Note that $S_t$ is $n$-contained in $L_i$, since the adversary is in fact enumerating $L_i$ at noise level $n$. So, we only need to argue that when $L'_k=L_{n,i}$ is encountered by the algorithm as it traverses $\mcC'_{f(t)}$, it finds that $|\cap_{L \in I_t \cup \{L'_k\}}L|=\infty$ . Assume for the sake of contradiction that $|\cap_{L \in I_t \cup \{L'_k\}}L| < \infty$. In this case, observe that we have found a finite $S_t$ of size at least $m^\star_n(L_i)+1$, for which $I_t \cup \{L'_k\}$ is a subcollection of $(L'_1,\ldots,L'_k)$, that contains $L'_k$, has finite intersection, and for every $L$ in the subcollection, which corresponds to some $L_{a,b}$, $S_t$ is $a$-contained in $L_b$. But this implies that the maximum in the RHS of \eqref{eqn:argmax-maintained-noisy} in \Cref{claim:argmax-maintained-noisy} is at least $m^\star_n(L_i)+1$, which contradicts that $T(L_{n,i})$ realizes the maximum (since $|T(L_{n,i})|=m^\star_n(L_i)$). Thus, $L_{n,i}$ passes both the checks when it is encountered by the algorithm, and hence gets added to $I_t$.
\end{proof}

Finally, we derive a sufficient condition for when the algorithm is exactly Pareto-optimal.

\begin{theorem}[Sufficient Condition for Exact Pareto-optimality]
    \label{thm:Pareto-optimality-guarantee-noisy}
    Let $\mcC=(L_1,L_2,\ldots)$ be a collection that satisfies the following property: for every $t \in \N$, $|\{(n', j): m^\star_{n'}(L_j)+1\le t\}| < \infty$, where the $m^\star_n(\cdot)$ values are those computed in Procedure \ref{proc:noisy}. Then, there exists a noisy non-uniform generation algorithm that achieves a Pareto-optimal sequence of generation times for $\mcC$.
\end{theorem}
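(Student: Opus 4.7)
The plan is to mirror the proof of \Cref{thm:Pareto-optimality-guarantee} from the noiseless setting, by constructing a fast-growing ``budget'' function $f$ that makes the $g(n,i)$ term in \Cref{theorem:noisy-non-uniform-gen-ub} never exceed $m^\star_n(L_i)+1$. Specifically, I would define
\[
f(t) \;=\; \max\bigl\{|\diag(\mcC_{n',j})| : (n',j)\text{ satisfies }m^\star_{n'}(L_j)+1 \le t\bigr\},
\]
with the convention $f(t)=0$ when the set on the right is empty. The hypothesis of the theorem guarantees that the indexing set $\{(n',j):m^\star_{n'}(L_j)+1\le t\}$ is finite for every $t$, so $f(t)$ is well-defined and finite. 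Monotonicity is immediate since enlarging $t$ only enlarges the set being maximized over.

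Next I would verify that $\lim_{t\to\infty}f(t)=\infty$. By \Cref{claim:T-bounded-noisy}, every $m^\star_{n'}(L_j)$ is finite, so every grid entry $(n',j)$ eventually joins the indexing set once $t \ge m^\star_{n'}(L_j)+1$. Since $|\diag(\mcC_{n',j})|$ is unbounded as $(n',j)$ ranges over the grid, for any target $M$ one can pick $(n',j)$ with $|\diag(\mcC_{n',j})|\ge M$ and then $t= m^\star_{n'}(L_j)+1$ forces $f(t)\ge M$. Having established that $f$ is a valid choice for the noisy generation algorithm, I would then observe that plugging $t=m^\star_n(L_i)+1$ into the definition of $f$ immediately gives $f(m^\star_n(L_i)+1)\ge |\diag(\mcC_{n,i})|$, because $(n,i)$ is itself in the indexing set. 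Consequently $g(n,i)\le m^\star_n(L_i)+1$.

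Putting the pieces together: running the algorithm of \Cref{sec:Pareto-optimal-noisy} with this $f$ and applying \Cref{theorem:noisy-non-uniform-gen-ub} yields $t^\star_n(L_i) = \max(g(n,i),m^\star_n(L_i)+1) = m^\star_n(L_i)+1$ for every $(n,i)$, which matches the Pareto-optimal lower bound of \Cref{claim:noisy-non-uniform-gen-lb}; hence the sequence is Pareto-optimal. I expect no real obstacle: the only subtle point is the verification that $f\to\infty$ in the two-dimensional diagonal ordering, and this is precisely what the sufficient condition is designed to ensure, by forbidding infinitely many $(n',j)$ from sharing any bounded $m^\star$ value. The rest is structurally identical to the noiseless argument.
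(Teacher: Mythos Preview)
Your proposal is correct and essentially identical to the paper's own proof: the paper defines the same function $f(t)=\max_{n',j}\{|\diag(\mcC_{n',j})|:m^\star_{n'}(L_j)+1\le t\}$, observes it is non-decreasing with $\lim_{t\to\infty}f(t)=\infty$, deduces $g(n,i)\le m^\star_n(L_i)+1$, and concludes via \Cref{theorem:noisy-non-uniform-gen-ub} and \Cref{claim:noisy-non-uniform-gen-lb}. Your verification that $f\to\infty$ (invoking \Cref{claim:T-bounded-noisy}) is slightly more explicit than the paper's, but otherwise the arguments coincide.
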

\begin{proof}
    Consider the function $f:\N \to \N$ defined as $f(t)=\max_{n',j}\{|\diag(\mcC_{n',j})|:m^\star_{n'}(L_j)+1 \le t\}$. Observe that $f$ is a non-decreasing function, and that $\lim_{t \to \infty}f(t)=\infty$. Furthermore, observe also that for any $n, i$, $f(m^\star_n(L_i)+1) \ge |\diag(\mcC_{n,i})|$. Therefore, if $g(n, i)$ is the smallest number $j$ for which $f(j) \ge |\diag(\mcC_{n,i})|$, then $g(n, i) \le m^\star_n(L_i)+1$. So, if we choose to run the algorithm given in \Cref{sec:Pareto-optimal-noisy} with this function $f$, the guarantee of \Cref{theorem:noisy-non-uniform-gen-ub} implies $t^\star_n(L_i)=m^\star_n(L_i)+1$ for every $L_i$ enumerated at noise level $n$. From \Cref{claim:noisy-non-uniform-gen-lb}, we conclude that the algorithm is Pareto-optimal.
\end{proof}

\section{Pareto-optimal Representative Non-uniform Generation}
\label{appsec:Pareto-optimal-representative}

We will first show that every $m^\star_\alpha(L_i)$ is finite.

\begin{claim}[Witness set is bounded]
    \label{claim:T-bounded-representative}
    The set $T$ in Step i of the while loop in Procedure \ref{proc:representative} always has bounded size. In particular, $m^\star_\alpha(L_i) < \infty$ for every $L_i$.
\end{claim}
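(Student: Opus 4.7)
The plan is to fix a while-loop iteration so that $(L'_1, \ldots, L'_j)$ is a fixed finite ordered collection, and to prove that the size of any finite $T$ that admits some witness subcollection $\mcC_{\chk}$ satisfying conditions (a)--(c) is bounded. Since $(L'_1, \ldots, L'_j)$ has only finitely many subcollections, it suffices to bound $|T|$ for each fixed $\mcC_{\chk}$ and then take the maximum over these finitely many bounds. Fix such a $\mcC_{\chk}$ and set $I := \cap_{L \in \mcC_{\chk}} L$; condition (b) forces $T \subseteq I$. When $I$ is finite there is nothing to do, as $|T| \leq |I|$ immediately.

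The interesting case is when $I$ is infinite. Here I would exploit the fact that $\mcA = \{A_g\}_{g \in [K]}$ is a partition of $U$ into only $K$ groups. By pigeonhole, at least one group $g^*$ satisfies $|A_{g^*} \cap I| = \infty$, so for any finite $T \subseteq I$ the set $A_{g^*} \cap (I \setminus T)$ remains infinite; hence $g^*$ cannot lie in the set $B$ of scarce groups with respect to $(\mcC_{\chk}, T)$, giving $|B| \leq K - 1$. Moreover, every $g \in B$ satisfies $A_g \cap I \subseteq T$, so $A_g \cap I$ must itself be finite. Defining the finite constant $C := \sum_{g \in [K]:\, |A_g \cap I| < \infty} |A_g \cap I|$, we obtain the crucial bound $\sum_{g \in B} |T \cap A_g| \leq C$, uniformly in $T$.

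Now I would invoke the scarcity condition (c). Either sub-case yields $\sum_{g \in B} |T \cap A_g| > \alpha |T|$: the first (``some scarce $g$ has $\emp_T^{\mcA}(g) > \alpha$'') implies this directly, and the second (``$\sum_{g \in B} \emp_T^{\mcA}(g) > \alpha (K - |B|)$'') gives $\sum_{g \in B} |T \cap A_g| > \alpha |T| (K - |B|) \geq \alpha |T|$ since $K - |B| \geq 1$. Combining with the $C$-bound gives $|T| < C / \alpha$, which is finite whenever $\alpha > 0$. Taking the maximum over the finitely many choices of $\mcC_{\chk}$ yields a uniform upper bound on $|T|$, and hence $m_\alpha^\star(L_i) < \infty$.

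The main obstacle is the infinite-$I$ case, where condition (b) alone yields no quantitative bound and one must extract information from scarcity. The key reduction is the pigeonhole observation that a group with infinite intersection with $I$ can never be scarce: this restricts $B$ to groups $g$ with finite $A_g \cap I$, making the total mass of $T$ in scarce groups at most the fixed finite constant $C$. The scarcity condition then forces at least an $\alpha$-fraction of $T$ into these scarce groups, which, combined with the $C$-bound, produces the finite bound $|T| < C/\alpha$.
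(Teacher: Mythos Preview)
Your proof is correct and follows essentially the same approach as the paper: fix $\mcC_{\chk}$, observe that scarce groups must have finite intersection with $I=\cap_{L\in\mcC_{\chk}}L$, bound the mass of $T$ in scarce groups by a fixed finite constant, and then use the scarcity condition to force $|T|$ below that constant divided by $\alpha$. The only cosmetic difference is that the paper uses $Kp/\alpha$ with $p=\max_{g\in R_2}|A_g\cap I|$ where you use $C/\alpha$ with $C=\sum_{g:|A_g\cap I|<\infty}|A_g\cap I|$; your bound is slightly sharper but the argument is the same.
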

\begin{proof}
    Consider any subcollection $\mcC_{\chk}$ of $(L'_1,\ldots,L'_j)$ that contains $L'_j$. 
    Let $T(\mcC_{\chk})$ be a finite subset of $\cap_{L \in \mcC_{\chk}}L$ which suffers group scarcity with respect to $\mcC_{\chk}$ and $\mcA$. 
    We will argue that $T(\mcC_{\chk})$ cannot be arbitrarily large. It suffices to reason about the case when $\cap_{L \in \mcC_{\chk}}L$ is infinite, since otherwise, $T(\mcC_{\chk})$ is immediately bounded. 

    To this end, consider the set of groups $R_1$ for which
    \begin{align}
        \label{eqn:R-1}
        R_1 &= \{g \in [K]: A_g \cap \left(\cap_{L \in \mcC_{\chk}}L \right)=\infty\}.
    \end{align}
    Note that since we assumed $\cap_{L \in \mcC_{\chk}}L$ is infinite, and $\mcA$ is a partition of $U$, $R_1$ is necessarily non-empty. On the other hand, let $R_2$ denote the groups for which
    \begin{align}
        \label{eqn:R-2}
        R_2 &= \{g \in [K]: A_g \cap \left(\cap_{L \in \mcC_{\chk}}L \right) <\infty\}.
    \end{align}
    In particular, let
    \begin{align}
        p = \max_{g \in R_2}|A_g \cap \left(\cap_{L \in \mcC_{\chk}}L \right)|.
    \end{align}
    Using that $K < \infty$, and the definition of the set $R_2$, we have that $p < \infty$. We can now claim that the size of $T(\mcC_{\chk})$ may be at most $\frac{Kp}{\alpha}$, which is a finite number. To see this, suppose $T(\mcC_{\chk})$ is a finite subset of $\cap_{L \in \mcC_{\chk}}L$ that has size strictly larger than $\frac{Kp}{\alpha}$. Observe that the scarce groups for $T(\mcC_{\chk})$ must necessarily be a subset of $R_2$. In that case, the largest that $\emp_{T(\mcC_{\chk})}^{\mcA}(g)$ can be is if we include all the strings in $A_g \cap \left(\cap_{L \in \mcC_{\chk}}L\right)$ in $T(\mcC_{\chk})$---but even in this case, we would have $\emp_{T(\mcC_{\chk})}^{\mcA}(g) < \frac{\alpha}{K} \le \alpha$. On the other hand, observe also that 
    \begin{align*}
        \sum_{g \in B}\emp_{T(\mcC_{\chk})}^{\mcA}(g) < \frac{\alpha|B|}{K} < \alpha < \alpha(K - |B|).
    \end{align*}
    Here, we used that $|B| < K$---this follows because we assumed that $\cap_{L \in \mcC_{\chk}}L$ is infinite, $T(\mcC_{\chk})$ is finite, and that $\mcA$ is a partition of $U$, which together imply that not all groups can be scarce. Thus, we can see that $T(\mcC_{\chk})$ would not satisfy group scarcity with respect to $\mcC_{\chk}$ and $\mcA$, if its size is larger than $\frac{Kp}{\alpha}$.

    Summarily, we have shown that for any $\mcC_{\chk}$ that contains $L'_j$, the largest that a finite subset $T(\mcC_{\chk})$ of $\cap_{L \in \mcC_{\chk}}L$ can be while suffering group scarcity with respect to $\mcC_{\chk}$ and $\mcA$ 
    is at most $\frac{Kp}{\alpha} := f(\mcC_{\chk}) < \infty$. So, define
    \begin{align}
        \label{eqn:d-star-def}
        d^\star = \max_{\substack{\mcC_{\chk} \subseteq (L'_1,\ldots,L'_j) \\ \mcC_{\chk} \ni L'_j}}f(\mcC_{\chk}).
    \end{align}
    Since the collection $(L'_1,\ldots,L'_j)$ is finite, there are only finitely many $\mcC_{\chk}$ to consider, and hence $d^\star$ is finite. 
    
    Finally, suppose for the sake of contradiction that there were to exist arbitrarily large finite $T$ that satisfy the conditions in Step i of the while loop. This would mean that there exists some $T$ of finite size $d > d^\star$, and a subcollection $\mcC_{\chk}$ of $(L'_1,\ldots,L'_j)$, such that $\mcC_{\chk}$ contains $L'_j$, $T \subseteq \cap_{L \in \mcC_{\chk}}L$, and also, $T$ suffers group scarcity with respect to $\mcC_{\chk}$ and $\mcA$.
    But this contradicts our derivation of \eqref{eqn:d-star-def} above, and concludes the proof that $T$ cannot be arbitrarily large.
\end{proof}

We can now argue that the sequence of $m^\star_\alpha(L_i)$ values forms a Pareto-optimal sequence.

\begin{claim}[$m^\star_\alpha(\cdot)$ forms a Pareto-optimal sequence]
    \label{claim:representative-non-uniform-gen-lb}
    Any algorithm $\mcG$ that satisfies $t_{\alpha, \mcG}(L_i) < m^\star_\alpha(L_i)+1$ for some $L_i$ must satisfy that $t_{\alpha, \mcG}(L_j) > m^\star_\alpha(L_j)+1$ for some $j < i$.
\end{claim}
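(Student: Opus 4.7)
The proof plan mirrors the structure of \Cref{claim:non-uniform-generation-lb} and \Cref{claim:noisy-non-uniform-gen-lb}, adapted to handle distributions over groups rather than deterministic string outputs. The assumption $t_{\alpha,\mcG}(L_i) < m^\star_\alpha(L_i)+1$ forces $m^\star_\alpha(L_i) \ge 1$, so the witness objects $T(L_i)$ and $\mcC(L_i)$ from Procedure \ref{proc:representative} are both non-empty. By the while-loop exit condition ($m_{\chk} > m^\star_\alpha(L'_{j-1})$) together with the non-decreasing property of the $m^\star_\alpha$ values within the insertion-sorted ordering, $\mcC(L_i)$ must contain at least one $L_j \neq L_i$ with $j < i$, and every such $L_j$ satisfies $m^\star_\alpha(L_j) < m^\star_\alpha(L_i)$; this is the direct analogue of \Cref{observation:S-contains-smaller-complexity}.

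The adversarial strategy is to enumerate the $m^\star_\alpha(L_i)$ strings of $T(L_i)$ in the first $m^\star_\alpha(L_i)$ time steps, so that $S_t = T(L_i)$ at $t = m^\star_\alpha(L_i)$. The crux of the argument is that the definition of group scarcity (\Cref{def:scarce-groups}) is precisely the negation of the condition under which an $\alpha$-representative distribution can assign all its mass to non-scarce groups. Concretely, if the algorithm wished to put zero mass on every scarce group $g \in B$, then $\alpha$-representation would require $\emp_t^{\mcA}(g) \le \alpha$ for each $g \in B$ and, by summing $\mcG_t^{\mcA}(g) \le \emp_t^{\mcA}(g) + \alpha$ over $g \in [K] \setminus B$ and using that the total mass is $1$, it would require $\sum_{g \in B} \emp_t^{\mcA}(g) \le \alpha(K - |B|)$. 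Both of these conditions fail by the definition of $T(L_i)$ suffering group scarcity with respect to $\mcC(L_i)$, so $\mcG_t^{\mcA}(g) > 0$ for some scarce $g \in B$.

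Having extracted a scarce group $g$ on which $\mcG_t$ places positive mass, I combine this with the generation guarantee. Since $|S_t| = m^\star_\alpha(L_i) \ge t_{\alpha,\mcG}(L_i)$, the support of $\mcG_t$ lies in $L_i \setminus S_t$, so some $x \in A_g \cap (L_i \setminus T(L_i))$ is generated with positive probability. By scarcity, $A_g \cap (\cap_{L \in \mcC(L_i)} L \setminus T(L_i)) = \emptyset$, so this $x$ lies outside $\cap_{L \in \mcC(L_i)} L$, hence outside some $L_j \in \mcC(L_i)$ with $L_j \neq L_i$ and $j < i$. The adversary now continues by enumerating the remaining strings of $L_j$ beyond time $t$ (valid because $T(L_i) \subseteq L_j$), producing a legitimate enumeration of $L_j$ on which $\mcG_t$ violated $\Pr_{x \sim \mcG_t}[x \in L_j \setminus S_t] = 1$ at time $t$. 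If we had $t_{\alpha,\mcG}(L_j) \le m^\star_\alpha(L_j) + 1 \le m^\star_\alpha(L_i)$, this would contradict representative generation for $L_j$ at time $t$; therefore $t_{\alpha,\mcG}(L_j) > m^\star_\alpha(L_j) + 1$, as desired.

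The main obstacle, and the only step that genuinely departs from the earlier two Pareto-optimality proofs, is the combinatorial equivalence in the middle paragraph: showing that the two disjuncts in the definition of group scarcity are exactly what is needed to force $\mcG_t$ to leak probability mass onto a scarce group under the $\ell_\infty$ representation constraint. Once that is established, the rest of the argument proceeds identically to the noiseless case.
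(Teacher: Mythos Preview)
Your argument is correct and uses the same ingredients as the paper's proof: the group-scarcity disjunction is exactly what prevents an $\alpha$-representative distribution from placing all its mass on $\cap_{L \in \mcC(L_i)} L \setminus S_t$, and the guarantee for $L_i$ then forces the violated language to be some $L_j \neq L_i$ with smaller $m^\star_\alpha$; the paper organizes this as a two-case split on whether $\Pr_{x \sim \mcG_t}[x \in \cap_{L \in \mcC(L_i)} L \setminus S_t] = 1$, whereas you thread it linearly by first pinning the support to $L_i \setminus S_t$, but the content is identical. One small slip: you write ``Both of these conditions fail,'' but group scarcity (\Cref{def:scarce-groups}) only guarantees that \emph{at least one} of the two inequalities fails --- that is still all you need for the contrapositive, so the conclusion stands.
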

\begin{proof}
    Note that the language $L_i$ must be such that $m^\star_\alpha(L_i) > 0$, since $t_{\alpha, \mcG}(\cdot)$ is always at least 1. In particular, this means that $\mcC(L_i)$ is non-empty.
    
    Suppose an adversary enumerates all the $m^\star_\alpha(L_i)$ strings in $T(L_i)$ in the first $m^\star_\alpha(L_i)$ time steps. Consider the distribution $\mcG_t$ output by the algorithm $\mcG$ at time $t=m^\star_\alpha(L_i)$. By assumption, $\mcG_t$ satisfies the properties required of $\alpha$-representation at $t$. There are two cases based on the mass that $\mcG_t$ assigns to the set $\cap_{L \in \mcC(L_i)}L \setminus S_t$:
    
    \paragraph{Case 1: $\Pr_{x \sim \mcG_t}\left[x \in \cap_{L \in \mcC(L_i)}L \setminus S_t \right]=1$.} The adversary can continue the enumeration of any $L \in \mcC(L_i)$ beyond $t$. Note that since $S_t=T(L_i)$, by definition of the set $T(L_i)$, we have that $S_t$ suffers group scarcity with respect to $\mcC(L_i)$ and $\mcA$. That is, if $B$ denotes the set of scarce groups with respect to $\mcC(L_i)$ and $S_t$, then either there exists a scarce group $g \in B$ that satisfies $\emp_{t}^{\mcA}(g) > \alpha$, or it is the case that $\sum_{g \in B}\emp_{t}^\mcA(g) > \alpha \cdot (K - |B|)$. Observe that $\mcG_t^{\mcA}(g)=\Pr_{x \sim \mcG_t}[x \in A_g]=0$ for every $g \in B$; otherwise, since $\mcG_t$ puts all its mass on $\cap_{L \in \mcC(L_i)}L\setminus S_t$ in the present case, we would have found a string in some $A_g \cap \left(\cap_{L \in \mcC(L_i)}L\setminus S_t\right)$, which contradicts that $g$ is a scarce group.
    
    So, suppose that there exists a scarce group $g \in B$ that satisfies $\emp_{t}^\mcA(g) > \alpha$. This implies that $|\emp_{t}^\mcA(g)-\mcG_t^\mcA(g)| = \emp_t^\mcA(g) > \alpha$, which contradicts the $\alpha$-representation property of $\mcG$. Thus, we cannot have $\emp_{t}^\mcA(g) > \alpha$ for any scarce group $g$.

    Now, suppose that $\sum_{g \in B}\emp_{t}^\mcA(g) > \alpha \cdot (K - |B|)$. This would imply that
    \begin{align*}
        \sum_{g \in [K]}\mcG_t^\mcA(g) &= \sum_{g \in [K] \setminus B}\mcG_t(g) \le \sum_{g \in [K] \setminus B}(\emp_t^{\mcA}(g)+\alpha) \\
        &= \sum_{g \in [K] \setminus B}\emp_t^\mcA(g) + \alpha \cdot (K - |B|) \\
        &< \sum_{g \in [K] \setminus B}\emp_t^\mcA(g) + \sum_{g \in B}\emp_t^\mcA(g) \\
        &= \sum_{g \in [K]}\emp_t^\mcA(g) = 1,
    \end{align*}
    which contradicts that $\mcG_t^\mcA$ is a valid probability distribution. Summarily, we deduce that Case 1 cannot happen.

    \paragraph{Case 2: $\Pr_{x \sim \mcG_t}\left[x \in \cap_{L \in \mcC(L_i)}L \setminus S_t \right]< 1$.} In this case, observe that there exists $L \in \mcC(L_i)$, such that $\Pr_{x \sim \mcG_t}[x \notin L \setminus S_t] > 0$. Moreover, $\mcC(L_i)$ cannot only contain copies of $L_i$; otherwise, the adversary can continue enumerating the rest of $L_i$ beyond $t=m^\star_\alpha(L_i)$, and $\mcG$ will have violated outputting a valid string at time $t$. So, it must be the case that $L \neq L_i$. The adversary can then continue enumerating this $L$ beyond $t$. Observe that by the manner in which $m^\star_\alpha(L_i)$ and $\mcC(L_i)$ are determined, such an $L$ corresponds to some $L_j$ for $j < i$ in the given specification of the collection $\mcC=(L_1,L_2,\ldots)$. Furthermore, by an argument identical to \Cref{observation:S-contains-smaller-complexity}, it is also the case that $m^\star_\alpha(L_j) < m^\star_\alpha(L_i)$. So, if it were the case that $t_{\alpha, \mcG}(L_j, \mcC)$ is at most $m^\star_\alpha(L_j)+1$, which is at most $m^\star_\alpha(L_i)$, then $\mcG_t$ should satisfy the guarantee that $\Pr_{x \sim \mcG_t}[x \in L_j \setminus S_t] = 1$; we just argued that this is not true, concluding the proof.
\end{proof}

The following claim shows again that whenever a language is moved forward when performing the insertion sort, its witness set continues to remain in the arg max of the relevant criterion in the prefix of its new position.

\begin{claim}[Arg max maintained]
    \label{claim:argmax-maintained-representative}
    For every $i\ge1$, and for every $t \ge i$, suppose $k$ is the index of language $L_i$ in $\mcC'_t=(L'_1,\ldots,L'_t)$ at the end of the $t^\text{th}$ iteration of the for loop. Then, $T(L_i)$ satisfies that
    \begin{align}
        \label{eqn:argmax-maintained-representative}
        T(L_i) \in \argmax_{T}\left\{|T|: T \text{ satisfies } (\star)\right\},
    \end{align}
    where $(\star)$ is the following condition: $T$ is finite, and there exists a subcollection $\mcC_{\chk} \subseteq (L'_1,\ldots,L'_k)$, %
    such that $\mcC_{\chk} \ni L'_k$, every $L \in \mcC_{\chk}$ contains $T$, and $T$ suffers group scarcity with respect to $\mcC_{\chk}$ and $\mcA$.
\end{claim}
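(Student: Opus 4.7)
The plan is to mirror the inductive strategy used in \Cref{claim:argmax-maintained} and \Cref{claim:argmax-maintained-noisy}, simply with the suitably updated condition $(\star)$ involving group scarcity. We induct on $t$. For the base case $t=i$, the statement holds by the very definition of $T(L_i)$ and $\mcC(L_i)$ assigned in Step (c) of Procedure \ref{proc:representative}. For the inductive step, suppose the claim holds at the end of the $t^\text{th}$ iteration of the for loop, with $L_i$ at position $k$ in $\mcC'_t$. In the $(t+1)^\text{th}$ iteration, $L_{t+1}$ is appended at the end and then bubbled leftwards by insertion sort. If $L_{t+1}$ settles at some index strictly greater than $k$, then $L_i$ stays at position $k$, the prefix $(L'_1,\ldots,L'_k)$ is unchanged, and the inductive hypothesis finishes this case.

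Otherwise, $L_{t+1}$ lands at an index in $\{1,\ldots,k\}$, which forces $L_i$ to position $k+1$. Consider the specific iteration of the while loop at which $j=k+1$ just before the swap is performed; at that moment $L'_k = L_i$ and $L'_{k+1} = L_{t+1}$, and the swap happens precisely because the break condition fails, i.e., $m_{\chk} \le m^\star_\alpha(L_i)$, where $m_{\chk}$ is obtained by maximizing over finite sets $T$ with subcollections $\mcC_{\chk} \subseteq (L'_1,\ldots,L'_{k+1})$ containing $L'_{k+1} = L_{t+1}$, satisfying condition $(\star)$. Once the swap is executed, the (unordered) set of languages occupying positions $1$ through $k+1$ remains frozen until the end of the $(t+1)^\text{th}$ iteration of the for loop, so it suffices to verify that $T(L_i)$ remains in the argmax immediately after this swap, over subcollections of $(L'_1,\ldots,L'_{k+1})$ containing the new $L'_{k+1}=L_i$.

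By the inductive hypothesis, $T(L_i)$ already realizes the maximum when subcollections are restricted to $(L'_1,\ldots,L'_k)$ (the old prefix, which does not include $L_{t+1}$). Hence, if enlarging to $(L'_1,\ldots,L'_{k+1})$ were to strictly increase the maximum, any new witness $T'$ with $|T'| > |T(L_i)|$ must come with a subcollection $\mcC'_{\chk}$ that contains $L_{t+1}$. But then $(T', \mcC'_{\chk})$ is precisely the kind of pair considered in Step (i) when computing $m_{\chk}$ for $L_{t+1}$ before the swap: $\mcC'_{\chk} \subseteq (L'_1,\ldots,L'_{k+1})$, $\mcC'_{\chk} \ni L_{t+1}$, every $L \in \mcC'_{\chk}$ contains $T'$, and $T'$ suffers group scarcity with respect to $\mcC'_{\chk}$ and $\mcA$. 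Thus $m_{\chk} \ge |T'| > m^\star_\alpha(L_i)$, contradicting the fact that the swap occurred only because $m_{\chk} \le m^\star_\alpha(L_i)$. This forces $T(L_i)$ to remain in the argmax, completing the inductive step.

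I expect no serious obstacle here; the argument is a bookkeeping analogue of the previous two claims, and the only point demanding care is that the condition $(\star)$ in the representative setting requires both set-containment ($L \supseteq T$ for every $L \in \mcC_{\chk}$) and the scarcity property, and one should verify that a witness pair $(T', \mcC'_{\chk})$ valid for the enlarged prefix automatically constitutes a valid candidate in the pre-swap maximization defining $m_{\chk}$. Since both conditions are preserved under the identification $L'_{k+1} \mapsto L_{t+1}$ (pre-swap), this is straightforward.
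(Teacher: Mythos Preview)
Your proposal is correct and follows essentially the same inductive argument as the paper's proof: induct on $t$, handle the base case by definition, split the inductive step on whether $L_{t+1}$ lands left of $L_i$, and in the nontrivial case use the failed break condition $m_{\chk} \le m^\star_\alpha(L_i)$ at $j=k+1$ to rule out any larger witness $T'$ (since such a $T'$ would force $L_{t+1}\in\mcC'_{\chk}$ and hence be a valid candidate in the pre-swap computation of $m_{\chk}$). The structure, key observations, and contradiction are the same as in the paper.
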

\begin{proof}
    For $t=i$, the claim is true by definition of $T(L_i)$. Now suppose as our inductive hypothesis that the claim is true for some $t \ge i$. We will show that it is true for $t+1$. Suppose $k$ was the index of $L_i$ at the end of the $t^\text{th}$ iteration of the for loop. If $k$ continues to be the index of $L_i$ at the end of the $(t+1)^\text{th}$ iteration as well (that is, the language $L_{t+1}$ was placed at some index larger than $k$ in $\mcC'_{t+1}$), the claim continues to be true by the inductive hypothesis. Otherwise, the language $L_{t+1}$ was placed at an index in $\{1,\ldots, k\}$, and so $L_i$ was moved to index $k+1$ in $\mcC'_{t+1}$. Consider the precise iteration of the while loop when $j$ was equal to $k+1$ (so that $L'_{k+1}$ was $L_{t+1}$), and $m_{\chk}$ was found to be at most $m^\star_\alpha(L'_k)$. At this point, Step iii would tell us to swap $L'_{k}$ and $L'_{k+1}$ (so that $L'_{k+1}$ becomes equal to $L_i$, and $L'_k$ becomes equal to $L_{t+1}$), and thereafter, observe that the (unordered) subcollection of languages before $L'_{k+1}$ remains unchanged until the while loop breaks, and the iteration of the for loop completes. So, it suffices to argue that, \textit{before swapping}, $T(L_i)$ belongs to the arg max of $|T|$ over the set $\{T\}$, where $T$ satisfies the condition: $T$ is finite, and there exists a subcollection $\mcC_{\chk} \subseteq (L'_1,\ldots,L'_k,L'_{k+1})$ where $\mcC_{\chk} \ni L'_{k}$, every language $L \in \mcC_{\chk}$ contains $T$, and furthermore, $T$ suffers group scarcity with respect to $\mcC_{\chk}$ and $\mcA$.
    Recall that our inductive hypothesis already guarantees that $T(L_i)$ belongs to the arg max over the set $\{T\}$, where the condition on $T$ only considers subcollections of $(L'_1,\ldots,L'_{k})$. That is, the only additional language now being considered in the condition is $L'_{k+1}$. So, if by the introduction of this new language in the condition, the max were to increase, it must be the case that for every $T$ belonging to the new arg max, the subcollection \textit{necessarily} contains $L'_{k+1}$. But this means that such a $T$, which is finite and has size strictly larger than $T(L_i)$, satisfies that: there exists a subcollection $\mcC_{\chk} \subseteq (L'_{1},\ldots,L'_{k+1})$, $\mcC_{\chk} \ni L'_{k+1}$, every language $L \in \mcC_{\chk}$ contains $T$, and furthermore,  $T$ also suffers group scarcity with respect to $\mcC_{\chk}$ and $\mcA$. This contradicts $m_{\chk}$ being at most $m^\star_\alpha(L'_k)$ (since $m^\star_\alpha(L'_k)=m^\star_\alpha(L_i)=|T(L_i)|$). Thus, it must be the case that $T(L_i)$ continues to be in the arg max, completing the inductive step.
\end{proof}

\generationtimerepresentative*

\begin{proof}
    First, observe that for any $t$ satisfying $|S_t| \ge g(i)$, $L_i$ belongs to the collection $\mcC'_{f(t)}=(L'_1,\ldots,L'_{f(t)})$ under consideration by the algorithm. Suppose $k$ is the index for which $L'_k=L_i$. We only need to argue that $L'_k$ satisfies the required check. Assume for the sake of contradiction that $L'_k$ fails the check. In this case, $S_t$, which has size at least $m^\star_\alpha(L_i)+1$, satisfies that $I_t \cup \{L'_k\}$ is a subcollection of $(L'_1,\ldots,L'_k)$ which contains $L'_k$, all languages in the subcollection contain $S_t$, and furthermore, $S_t$ suffers group scarcity with respect to $I_t \cup \{L'_k\}$ and $\mcA$.
    Thus, the set $I_t$ constructed by the algorithm above contains $L_i$.

    We will now show that, for any enumeration of $L_i$, the output distribution $\mcG_t$ satisfies $\Pr_{x \sim \mcG_t}[x \in L_i \setminus S_t]=1$ for every $t$ satisfying $|S_t| \ge \max(g(i), m^\star_\alpha(L_i)+1)$. As argued %
    above, we know that the collection $I_t$ constructed by the algorithm at time $t$ contains $L_i$; in particular, it is non-empty. We also justified in the description of the algorithm that in the case that $I_t$ is non-empty, $\mcG_t$ has mass only on strings \textit{not} belonging to $S_t$, and belonging to \textit{every} language in $I_t$. Since $I_t$ contains $L_i$, we are done.

    We will now argue that the algorithm satisfies $\alpha$-representation for every $t \ge 1$. In the case that $I_t$ is found to be empty, observe that $\mcG_t$ is set to be $\emp_t$, and so, $\|\mcG_t^{\mcA}-\emp_t^\mcA\|_\infty=0$. 
    
    So, consider the case where $I_t$ is non-empty, and consider the set $B$ of scarce groups with respect to $I_t$ and $S_t$. Observe that by construction of $I_t$, $S_t$ does not suffer group scarcity with respect to $I_t$ and $\mcA$. This means that for every scarce group $g \in B$, $\emp_{t}^\mcA(g) \le \alpha$, and also, $\sum_{g \in B}\emp_{t}^\mcA(g) \le \alpha \cdot (K-|B|)$. Furthermore, when $I_t$ is non-empty, recall that the algorithm puts no mass on any string belonging to a scarce group. Thus, for every scarce group $g \in B$, we have that $|\mcG_t^\mcA(g)-\emp_t^\mcA(g)| = |\emp_t^\mcA(g)| \le \alpha$.

    Now, consider a non-scarce group $g \in [K] \setminus B$. Recall that the mass that the algorithm assigns to this group is entirely concentrated on the single string $s_g$ that it chooses. Furthermore, this mass is exactly equal to $\emp_{t}^\mcA(g)$, \textit{plus} an amount that is equal to $\frac{1}{K-|B|}\sum_{g \in B}\emp_t^\mcA(g)$, where the additional mass is the result of the algorithm distributing the mass that $\emp_t^\mcA$ has on scarce groups evenly among the non-scarce groups. So, we have that
    \begin{align*}
        |\mcG_t^\mcA(g)-\emp_t^\mcA(g)| &= \left|\frac{1}{K-|B|}\sum_{g \in B}\emp_t^\mcA(g)\right| \le \frac{\alpha \cdot (K-|B|)}{K-|B|} = \alpha.
    \end{align*}
    Thus, we have established that $\|\mcG_t^{\mcA}-\emp_t^\mcA\|_\infty \le \alpha$.
\end{proof}

We conclude by stating the corresponding sufficient condition for exact Pareto-optimality.
\begin{theorem}[Sufficient Condition for Exact Pareto-optimality]
    \label{thm:Pareto-optimality-guarantee-representative}
    Let $\mcC=(L_1,L_2,\ldots)$ be a collection that satisfies the following property: for every $t \in \N$, $|\{j: m^\star_\alpha(L_j)+1\le t\}| < \infty$, where the $m^\star_\alpha(\cdot)$ values are those computed in Procedure \ref{proc:representative}. Then, there exists an algorithm that achieves a Pareto-optimal sequence of non-uniform generation times for $\mcC$.
\end{theorem}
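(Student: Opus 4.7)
The plan is to mirror the argument used for \Cref{thm:Pareto-optimality-guarantee} and \Cref{thm:Pareto-optimality-guarantee-noisy}, carefully choosing the function $f$ fed into the representative generation algorithm so that the $\max$ in $t^\star_\alpha(L_i)=\max(g(i), m^\star_\alpha(L_i)+1)$ from \Cref{theorem:representative-non-uniform-gen-ub} is always attained by the second argument. Specifically, I would define
\begin{equation*}
    f(t) \;=\; \max\{j : m^\star_\alpha(L_j)+1 \le t\},
\end{equation*}
which is well-defined precisely because the hypothesis guarantees that $\{j : m^\star_\alpha(L_j)+1 \le t\}$ is a finite set for every $t \in \N$.

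The first routine step is to verify that $f:\N \to \N$ is non-decreasing and satisfies $\lim_{t\to\infty}f(t)=\infty$. Monotonicity is immediate from the definition, since the set over which the maximum is taken only grows with $t$. For the limit, observe that \Cref{claim:T-bounded-representative} guarantees $m^\star_\alpha(L_j) < \infty$ for every $j$, so every index $j$ eventually appears in $\{j:m^\star_\alpha(L_j)+1\le t\}$ for sufficiently large $t$; therefore $f(t)$ must eventually exceed any fixed integer.

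Next, I would apply the guarantee of \Cref{theorem:representative-non-uniform-gen-ub} with this specific choice of $f$. By construction, $f(m^\star_\alpha(L_i)+1) \ge i$ for every $i$, which means the smallest $j$ with $f(j) \ge i$, namely $g(i)$, satisfies $g(i) \le m^\star_\alpha(L_i)+1$. Hence \Cref{theorem:representative-non-uniform-gen-ub} yields
\begin{equation*}
    t^\star_\alpha(L_i) \;=\; \max\bigl(g(i),\, m^\star_\alpha(L_i)+1\bigr) \;=\; m^\star_\alpha(L_i)+1
\end{equation*}
for every language $L_i$ in the collection. Combined with the matching lower bound from \Cref{claim:representative-non-uniform-gen-lb}, which asserts that $m^\star_\alpha(\cdot)+1$ is a Pareto-optimal sequence of generation times, this immediately concludes that the resulting algorithm attains an exactly Pareto-optimal sequence.

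There is no real technical obstacle here: all of the conceptual heavy lifting was done in establishing \Cref{claim:representative-non-uniform-gen-lb} (the lower bound), \Cref{theorem:representative-non-uniform-gen-ub} (the upper bound), and \Cref{claim:T-bounded-representative} (finiteness of $m^\star_\alpha$). The role of the sufficient condition is purely to ensure that the function $f$ defined above is a bona fide function $\N \to \N$ (rather than potentially taking the value $\infty$), which then lets the upper bound meet the lower bound language-by-language.
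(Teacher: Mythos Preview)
Your proposal is correct and follows essentially the same approach as the paper: define $f(t)=\max\{j:m^\star_\alpha(L_j)+1\le t\}$, verify it is a valid non-decreasing function with $f(t)\to\infty$, deduce $g(i)\le m^\star_\alpha(L_i)+1$ so that \Cref{theorem:representative-non-uniform-gen-ub} gives $t^\star_\alpha(L_i)=m^\star_\alpha(L_i)+1$, and conclude Pareto-optimality via \Cref{claim:representative-non-uniform-gen-lb}. Your write-up is in fact slightly more detailed than the paper's (e.g., explicitly invoking \Cref{claim:T-bounded-representative} for the limit), but the argument is the same.
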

\begin{proof}
    Consider the function $f:\N \to \N$ defined as $f(t)=\max\{j:m^\star_\alpha(L_j)+1 \le t\}$. Observe that $f$ is a non-decreasing function, and that $\lim_{t \to \infty}f(t)=\infty$. Furthermore, observe also that for any $i$, $f(m^\star_\alpha(L_i)+1) \ge i$. Therefore, if $g(i)$ is the smallest number $j$ for which $f(j) \ge i$, then $g(i) \le m^\star_\alpha(L_i)+1$. So, if we choose to run the algorithm given in \Cref{sec:Pareto-optimal-representative} with this function $f$, the guarantee of \Cref{theorem:representative-non-uniform-gen-ub} implies $t^\star_\alpha(L_i)=m^\star_\alpha(L_i)+1$ for every $L_i$. From \Cref{claim:representative-non-uniform-gen-lb}, we conclude that the algorithm is Pareto-optimal.
\end{proof}

\end{document}